\documentclass[runningheads]{llncs}

\usepackage[T1]{fontenc}

\usepackage{amsfonts}

\usepackage{latexsym,amsmath,amsthm,epsfig,amssymb,graphics,amscd,amsfonts}
\usepackage{float}
\usepackage{comment}
\usepackage{verbatim}
\usepackage{color}
\usepackage[usenames,dvipsnames,table]{xcolor}
\usepackage{indentfirst}
\usepackage{latexsym,amsmath,epsfig,amssymb,graphics}
\usepackage{stmaryrd}
\usepackage{mathrsfs}
\usepackage{algorithm, algorithmic}
\usepackage{amsfonts}
\usepackage{makeidx}
\usepackage{mathtools}
\usepackage{makecell}
\usepackage{multirow}
\usepackage{siunitx}
\usepackage{caption}
\usepackage{subcaption}
\usepackage{wrapfig}
\usepackage{booktabs}
\usepackage{nicefrac}
\usepackage[hyperfootnotes=true,colorlinks,linkcolor=RedViolet,anchorcolor=RedViolet,citecolor=blue,urlcolor=black]{hyperref}
\usepackage[capitalize]{cleveref}

% \crefname{lemma}{Lem.}{Lem.}
% \creflabelformat{lemma}{#2\textup{#1}#3}
% \crefname{example}{Exmp.}{Exmp.}
% \creflabelformat{example}{#2\textup{#1}#3}
% \crefname{section}{Sect.}{Sect.}
% \creflabelformat{section}{#2\textup{#1}#3}
% \crefname{appendix}{Appx.}{Appx.}
% %\crefname{appendix}{App.}{App.}
% \creflabelformat{appendix}{#2\textup{#1}#3}
% \crefname{definition}{Def.}{Def.}
% \creflabelformat{definition}{#2\textup{#1}#3}
% \crefname{theorem}{Thm.}{Thm.}
% \creflabelformat{theorem}{#2\textup{#1}#3}
% \crefname{proposition}{Prop.}{Prop.}
% \creflabelformat{proposition}{#2\textup{#1}#3}
% \crefname{corollary}{Cor.}{Cor.}
% \creflabelformat{corollary}{#2\textup{#1}#3}
% \crefname{problem}{Problem}{Problem}
% \creflabelformat{problem}{#2\textup{#1}#3}
% \crefname{algorithm}{Alg.}{Alg.}
% \creflabelformat{algorithm}{#2\textup{#1}#3}
% %\Crefname{equation}{Eq.}{Eqs.}
% \crefname{equation}{Eq.}{Eq.}
% %\creflabelformat{equation}{#2(\textup{#1})#3}
% \crefname{enumi}{}{}
\crefname{lemma}{Lem.}{Lem.}
\crefname{example}{Exmp.}{Exmp.}
\crefname{section}{Sect.}{Sect.}
\Crefname{appendix}{Appx.}{Appx.}
%\crefname{appendix}{App.}{App.}
\crefname{definition}{Def.}{Def.}
\crefname{theorem}{Thm.}{Thm.}
\crefname{corollary}{Cor.}{Cor.}
\crefname{algorithm}{Alg.}{Alg.}
\crefname{proposition}{Prop.}{Prop.}

\usepackage{todonotes,xspace}
\setlength{\marginparwidth}{1.5cm}

\usepackage[misc]{ifsym}
\usepackage{orcidlink}

\newcommand{\mygreen}[1]{{\color{green!60!gray} #1}}
\newcommand{\myred}[1]{{\color{red!60!gray} #1}}
\newcommand{\myyellow}[1]{{\color{yellow!60!gray} #1}}
\usepackage{bbding}
\newcommand{\mycrossmark}{\myred{\XSolidBrush}}
\newcommand{\mycheckmark}{\mygreen{\Checkmark}}
\newcommand{\myquestionmark}{\myyellow{\large \textbf{?}}}

% ====== Symbols ======
\newcommand{\Real}{\mathbb{R}}
\newcommand{\Nat}{\mathbb{N}}

\newcommand{\defeq}{\hat{~=~}}

\newcommand{\bm}[1]{\boldsymbol{#1}}
\newcommand{\seq}[1]{\boldsymbol{#1}}
\newcommand{\norm}[1]{
    \lVert #1 \rVert
}

\newcommand{\Set}[2]{
    \left\{ #1 \mid  #2 \right\}
}

\newcommand{\tool}[1]{\textsc{#1}}
% ====== Special Symbols ===
% barrier certificate
\newcommand{\domain}{\mathcal X}
\newcommand{\init}{\mathcal I}
\newcommand{\unsafe}{\mathcal U}

\newcommand{\Lie}{\mathfrak L}
\newcommand{\traj}{\xi}
\newcommand{\reach}{\mathcal R}

\newcommand{\qm}[0]{\textbf{\textrm{QM}}}

\newcommand{\homo}[1]{\tilde{#1}}
% \newcommand{\num}[1]{\widehat{#1}}

% ======  theorem-like environments =====
% \newtheorem{theorem}{Theorem}
% \newtheorem{lemma}[theorem]{Lemma}
% \newtheorem{proposition}[theorem]{Proposition}
% \newtheorem{definition}[theorem]{Definition}
% \newtheorem{assumption}[theorem]{Assumption}
% \newtheorem{example}[theorem]{Example}
% \newtheorem{corollary}[theorem]{Corollary}
% \newtheorem{problem}[]{Problem}

% no number
% \newtheorem*{theorem*}{Theorem}
% \newtheorem*{proposition*}{Proposition}
% \newtheorem*{lemma*}{Lemma}
% \newtheorem*{remark}{Remark}

% % manually number a theorem
% \newtheorem{manualtheoreminner}{Theorem}
% \newenvironment{manualtheorem}[1]{%
%   \renewcommand\themanualtheoreminner{#1}%
%   \manualtheoreminner
% }{\endmanualtheoreminner}

% \newtheorem{manuallemmainner}{Lemma}
% \newenvironment{manuallemma}[1]{%
%   \renewcommand\themanuallemmainner{#1}%
%   \manuallemmainner
% }{\endmanuallemmainner}

% \newtheorem{manualremarkinner}{Remark}
% \newenvironment{manualremark}[1]{%
%   \renewcommand\themanualremarkinner{#1}%
%   \manualremarkinner
% }{\endmanualremarkinner}

% \newtheorem{manualexampleinner}{Example}
% \newenvironment{manualexample}[1]{%
%   \renewcommand\themanualexampleinner{#1}%
%   \manualexampleinner
% }{\endmanualexampleinner}

% ===== macros ======

% ==== comments ====
\RequirePackage{todonotes,xspace}
\setlength{\marginparwidth}{4cm}

\newcommand{\oomit}[1]{ }

\makeatletter
\def\thanks#1{\protected@xdef\@thanks{\@thanks
        \protect\footnotetext{#1}}}
\makeatother

\begin{document}

\title{On Completeness of SDP-Based Barrier Certificate Synthesis over Unbounded Domains
% \thanks{This work has been partially funded   by the National Key R\&D Program of China under grant No.\ 2022YFA1005101, by the NSFC under grant No.\ 62192732 and 62032024, by the CAS Project for Young Scientists in Basic Research under grant No.\ YSBR-040. 
% }
}
% Generalizing SDP-Based Barrier Certificate Synthesis to Unbounded Domains by Dropping Archimedean Condition}

\titlerunning{On Completeness of SDP-Based BC-Synthesis over Unbounded Domains}
\authorrunning{H.~Wu et al.}

\author{Hao Wu\inst{1}\orcidlink{0000-0001-9368-4744} \and
Shenghua Feng\inst{2}\orcidlink{0000-0002-5352-4954}  \and
Ting Gan \inst{3}\orcidlink{0000-0002-4880-5129} \and
Jie Wang \inst{4} \orcidlink{0000-0002-9681-1451} \and
Bican Xia \inst{5}\orcidlink{0000-0002-2570-2338} \and
Naijun Zhan \inst{6,1}\inst{(}\Envelope\inst{)}\orcidlink{0000-0003-3298-3817}
}

\institute{
State Key Lab. of Computer Science, Institute of Software, University of Chinese Academy of Sciences, Beijing, China
\email{\{wuhao,znj\}@ios.ac.cn}
% \and 
% University of Chinese Academy of Sciences, Beijing, China 
\and 
Zhongguancun Laboratory, Beijing, China 
\email{fengsh@zgclab.edu.cn}
\and
School of Computer Science, Wuhan University, China 
\email{ganting@whu.edu.cn}
\and
Academy of Mathematics and Systems Science, CAS, Beijing, China
\email{wangjie212@amss.ac.cn}
\and 
School of Mathematical Sciences, Peking University, Beijing, China 
\email{xbc@math.pku.edu.cn}
\and
School of Computer Science, Peking University, Beijing, China
}

% \begin{CCSXML}
% <ccs2012>
%    <concept>
%        <concept_id>10010520.10010553</concept_id>
%        <concept_desc>Computer systems organization~Embedded and cyber-physical systems</concept_desc>
%        <concept_significance>500</concept_significance>
%        </concept>
%    <concept>
%        <concept_id>10003752.10003790.10002990</concept_id>
%        <concept_desc>Theory of computation~Logic and verification</concept_desc>
%        <concept_significance>500</concept_significance>
%        </concept>
%    <concept>
%        <concept_id>10002950.10003714.10003716.10011138.10010042</concept_id>
%        <concept_desc>Mathematics of computing~Semidefinite programming</concept_desc>
%        <concept_significance>500</concept_significance>
%        </concept>
%  </ccs2012>
% \end{CCSXML}

% \ccsdesc[500]{Computer systems organization~Embedded and cyber-physical systems}
% \ccsdesc[500]{Theory of computation~Logic and verification}
% \ccsdesc[500]{Mathematics of computing~Semidefinite programming}

\maketitle

\begin{abstract}
Barrier certificates, serving as differential invariants that witness system safety, play a crucial role in the verification of cyber-physical systems (CPS).
Prevailing computational methods for synthesizing barrier certificates are  based on semidefinite programming (SDP) by exploiting \emph{Putinar Positivstellensatz}.
Consequently, these approaches are limited by the \emph{Archimedean condition}, which requires all variables to be bounded, i.e., systems are defined over bounded domains.
For systems over unbounded domains, unfortunately, existing methods become incomplete and may fail to identify potential barrier certificates.

~~~~In this paper, we address this limitation for the unbounded cases.
We first give a complete characterization of polynomial barrier certificates by using \emph{homogenization}, a recent technique in the optimization community to reduce an unbounded optimization problem to a bounded one.
Furthermore, motivated by this formulation, we introduce the definition of \emph{homogenized systems} and propose a complete characterization of a family of non-polynomial barrier certificates with more expressive power. 
Experimental results demonstrate that our two approaches are more effective while maintaining a comparable level of efficiency.

\keywords{
Safety \and Barrier certificates \and Semidefinite programming \and Homogenization}

\end{abstract}

\section{Introduction}\label{sec:intro}

% \paragraph{Background}
With recent advancements in optimization theory and computational techniques, Cyber-Physical Systems (CPS), which involve the seamless integration of physical components and software systems, have proliferated across various application domains. 
A significant subset of CPS, known as safety-critical systems, presents a heightened level of concern. 
Failures or malfunctions in such systems can lead to severe safety risks for individuals and the environment. 
Examples of safety-critical CPS include aircraft, automobiles, integrated medical devices, nuclear power plants, and biological systems.
As a result, ensuring the safety of these systems has become a primary focus of extensive academic research. 

% \revcomment{no discussion is given to hybrid systems beyond the claim that the method can be extended to use with hybrid systems in the conclusion. Supporting evidence or an explanation of the approach to handle discrete changes to the dynamical system are needed.}
% Hybrid systems are mathematical models that involves both continuous dynamics and discrete transitions, and hence are widely used for modelling CPS.
One of the key challenges in CPS verification is the safety problem (or dually, the reachability problem), i.e., to demonstrate that a system, starting from its initial states, never enters an unsafe region.
In general, the safety problem of CPS is undecidable \cite{henzinger95stoc}.
The most challenging aspect of such problem lies in reasoning about the continuous dynamics, which are typically described by ordinary differential equations (ODEs).

% Reachability analysis aims to compute or approximate the set of reachable states.
% The choice of different set representations leads to various approaches in this field. 
% For example, one can utilize geometric objects (such as hyper-rectangles \cite{moore09book}, polytopes \cite{asarin00hscc}, ellipsoids \cite{kurzhanski00hscc}, zonotopes \cite{girard05hscc}) or symbolic representations (such as support functions \cite{guernic09cav}, Taylor models \cite{berz98rc,chen12rtss}) to depict sets of system states and perform set propagation to construct approximations of the reachable set.
% For a comprehensive survey on this topic, we recommend referring to \cite{althoff21survey}.
% Alternatively, simulation-based method represents system states by nearby sampled trajectories and attempt to cover the reachable set by a finite number of neighborhoods of trajectories  \cite{girard06hscc,donze07hscc,asarin07acta,duggirala13emsoft,fan16emsoft}.
% Another class of methods represents system states by constructing a finite state abstraction of the system, thereby enabling the incorporation of model checking techniques \cite{alur00ieee,baier08book,tabuada09book}.

\emph{Deductive verification}, derived from Hoare-style program verification \cite{hoare69}, offers a method to verify safety without directly computing the reachable set.
At the core of deductive verification lies the synthesis of \emph{differential invariants} \cite{liu2011emsoft,platzer08cav}, which extend the concept of inductive invariants to the continuous-time domain.
Specifically, a differential invariant is a set of states from which any trajectories starting from it can never escape.
With a priori specified template, the invariant generation problem boils down to solving the constraints encoding the invariant condition. 
When all involved constraints are polynomial, the problem is decidable but has time complexity doubly exponential in the number of variables~\cite{liu2011emsoft}, according to Tarski's theorem \cite{tarski51} and the complexity for the quantifier elimination procedure \cite{dh88}.
Consequently, considerable efforts have been dedicated to identifying differential invariants that allow for efficient synthesis.

In their seminar work \cite{prajna04hscc}, Prajna and Jadbabaie introduced the concept of barrier certificates as witnesses to safety.
Namely, a barrier certificate is a real-valued function whose zero sub-level set serves as a differential invariant, separating the set of initial states and the unsafe region.
It is important to note that, for the purpose of efficient synthesis, the barrier certificate condition strengthens the general condition of differential invariants.
Since then, various definitions of barrier certificates have been proposed, aiming to relax the original barrier certificate conditions while still allowing for efficient synthesis.
Examples of such definitions include exponential-type barrier certificates \cite{kong13cav}, Darboux-type barrier certificates \cite{zeng16emsoft}, general convex barrier certificates \cite{dai17jsc} and vector barrier certificates \cite{sogokon18fm}, and invariant barrier certificates \cite{wang22iac}.
Moreover, similar notions of barrier certificates have been developed for verification problems that involve control inputs \cite{xu15adhs,ames17tac}, disturbances \cite{wang17jssc}, stochastic dynamics \cite{prajna07tac,huang17ecs,jagtap21tac,DBLP:conf/cav/FengC00Z20}, and temporal logic specifications \cite{tichakorn16tac,murali24hscc}.
These extensions broaden the applicability of barrier certificates in various domains.
Recently, there are also works aim at generalizing the notion of $k$-inductiveness for safety verification, leading to the definitions of $t$-barrier certificates \cite{bak18adhs} and $k$-inductive barrier certificates \cite{anand21cdc,anand22hscc}. 

Sum-of-squares optimization is a well-established computational technique for synthesizing barrier certificates and has been employed in most of the works mentioned above.
Typically, the barrier certificate conditions are first encoded into constraints involving sum-of-squares polynomials.
These constraints are then translated into SDP and solved by numerical solvers.
In scenarios where the domains are bounded, one can choose to rely on either a sound characterization or a complete characterization to encode the conditions. 
The differences between these two characterizations are often overlooked, as their formulations are quite similar.
However, when dealing with systems defined over unbounded domains, the sound characterization tends to be conservative while the complete characterization can not be utilized due to the violation of the Archimedean condition in Putinar's Positivstellensatz.
In such unbounded cases, existing methods solely rely on the sound characterization, potentially leading to conservative results.

Besides sum-of-squares optimization, much effort have been devoted to incorporate other numerical methods for solving the obtained constraints, for instance, interval arithmetic \cite{gao12ijcar,gao13cade,djaballah17auto}, linear programming \cite{ben16ima}, and data-driven approaches \cite{zhao20hscc,abate21hscc,salamati22l4dc,peruffo21tacas,zhao23ecs}.

\paragraph{Contributions.}
Our main contributions are threefold:
\begin{enumerate}
    \item We explicitly clarify the connection between the soundness and the completeness of the sum-of-squares characterization of barrier certificates, which is mostly overlooked in existing works. This can be considered as a minor contribution. (See \cref{sec:problem})
    \item We utilize the \emph{homogenization} technique from \cite{huang23mp} to derive the first complete sum-of-squares characterization of polynomial barrier certificates over unbounded domains. (See \cref{sec:poly})
    \item We introduce the definition of homogenized systems and consider a specific class of non-polynomial barrier certificates with more expressive power. We also propose a complete sum-of-squares characterization for this class of non-polynomial barrier certificates. (See \cref{sec:semi})
\end{enumerate}

Finally, we implement algorithms for synthesizing barrier certificates based on the existing incomplete characterization and our two novel complete characterizations.
These algorithms are tested over a set of benchmarks with unbounded domains adapted from the literature. 
Experimental results demonstrate that the two complete characterizations are more expressive while maintaining a comparable level of efficiency. (See \cref{sec:exp})

\paragraph{Organization.}
The rest of this paper is organized as follows: 
\cref{sec:pre} introduces algebraic tools that will be used. 
\cref{sec:problem} formulates the barrier certificate synthesis problems and explains the connection between the sound and the complete characterization in the bounded case.
\cref{sec:poly} proposes the first complete characterization of polynomial barrier certificates over unbounded domains.
\cref{sec:semi} introduces the definition of homogenized systems and extends the results to a class of non-polynomial barrier certificates.
Finally, \cref{sec:exp} reports the experimental results and \cref{sec:summary} concludes the paper. 

\section{Preliminaries}
\label{sec:pre}

In this section, we fix basic notations and introduce necessary concepts concerning sum-of-squares optimization. 
For interested readers, we recommend \cite{lasserre09book,blekherman2012sdp} for a detailed treatment of this topic.

\paragraph{Basic Notations.}
% $\NatIndex{m}{n}$ represents the index set~$\{m,m+1,\dots,n\}$ for any naturals $m, n$ such that $m\leq n$. 
Let $\Nat$, $\Real$, $\Real_{\ge 0}$, and $\Real_{>0}$ denote the set of all natural numbers, the set of reals, non-negative real numbers and the set of positive real numbers, respectively.
The set of continuously differentiable functions over $\Real^n$ is denoted by $\mathcal{C}^1(\Real^n)$.
%For any $r\in \Real$, $\lceil r\rceil$ denotes the smallest integer larger than $r$.
By convention, we use boldface letters to denote vectors and vector-valued functions, 
e.g., $\seq{x}=(x_1,\dots,x_n)$ denotes a state variable and $\seq{f}=(f_1,\dots,f_n)$ denotes a vector field. 
For vectors $\seq{x}, \seq{y}\in \Real^n$, $\langle \seq{x},\seq{y}\rangle\defeq \sum_{i=1}^n x_iy_i$ represents the inner product of $\seq{x}$ and $\seq{y}$, 
and $\norm{\seq x}\defeq \sqrt{\langle \seq{x},\seq{x}\rangle}$ denotes the standard Euclidean norm.

Let $\Real[\seq x]$ denote the set of polynomials in variables $\seq x$ with real coefficients.
% $\Real^d[\seq x]$ denote the set of polynomials with degree up to $d$. 
A basic semialgebraic set $\mathcal{K}\subseteq \mathbb R^n$ is of the form $\Set{\bm{x}\in \Real^n}{p_1(\bm{x}) \triangleright 0, \dots, p_m(\bm{x}) \triangleright 0}$, 
where $p_i(\bm{x}) \in \mathbb{R}[\bm{x}]$ and $\triangleright\in \{\ge, >\}$. 
An equality $p(\seq{x})=0$ can be represented by two inequalities $p(\seq{x}) \ge 0$ and $-p(\seq{x})\ge0$.
A basic semialgebraic set is considered \emph{closed} when its defining polynomials contain only non-strict inequalities.
Semialgebraic sets are formed as unions of basic semialgebraic sets. i.e., $\bigcup_{i=1}^{n} \mathcal{K}_i$, where each $\mathcal{K}_i$ is a basic semialgebraic set.
% For any (semialgebraic) set $S\subseteq \Real^n$, $\cl{S}$ denotes the closure of $S$. 

\paragraph{Sum-of-Squares Polynomials.}
Given $S\subseteq \Real^n$, we say $p(\seq x)\in \Real[\seq x]$ is \emph{nonnegative} (resp. \emph{strictly positive}) over $S$ if $p(\seq x)\ge 0$ (resp. $p(\seq x)> 0$) for any $\seq x\in S$.
Sum-of-squares (SOS) polynomials are an important subset of globally nonnegative polynomials over $\Real^n$.
A polynomial $p(\seq x)\in \Real[\seq x]$ is said to be a \emph{sum-of-squares} polynomial if it can be expressed as $p(\bm{x})=\sum_{i=1}^m p_i(\bm{x})^2$, where $p_i(\bm{x})\in \mathbb R[\bm{x}]$ for each $i$.
% Similar to $\mathbb R[\seq x]$ and $\mathbb R^d[\seq x]$, we use $\Sigma[\seq x]$ and $\Sigma^{d}[\seq x]$ to denote the set of sum-of-squares polynomials and sum-of-squares polynomials of degree up to $d$ in variables~$\seq x$, respectively. 
We use $\Sigma[\seq x]$ to denote the set of SOS polynomials in variables~$\seq x$.

\paragraph{Putinar's Theorem.}
Given polynomials $p_1,\dots,p_m\in\Real[\seq{x}]$. 
Let $\mathcal K$ be a closed basic semialgebraic set described by
\begin{equation}\label{eq:K}
\mathcal K \defeq \Set{\seq x \in \Real^n}{p_1(\seq x)\ge~0, \dots, p_m(\seq x)\ge~0}.
\end{equation}
The set of polynomials
\begin{equation*}
    \qm(p_1, p_2, \ldots, p_m) \defeq
        \big\{ \sigma_0 + \sum_{i=1}^{m} \sigma_i p_i \mid \sigma_i \in \Sigma[\bm{x}]\text{ for }i=0,1,\dots,m \big\}
\end{equation*} 
is called the \emph{quadratic module} generated by $p_1,\dots,p_m$. 
A quadratic module $\qm$ is \emph{Archimedean}, or satisfies the \emph{Archimedean condition}, 
if $N - \norm{\bm{x}}^2 \in \qm$ for some constant $N \in \mathbb{N}$. 
Since a sum-of-squares polynomial~$\sigma(\seq x)\in \Sigma[\seq x]$ is nonnegative over $\Real^n$, the following result trivially holds.
\begin{proposition}\label{prop:qm}
    Given $\mathcal K$ as defined in \cref{eq:K}, then 
    \begin{equation*}
        f(\seq x)\in \qm(p_1,\dots,p_m) \implies f(\seq x)\ge 0 \text{ over } \mathcal K.    
    \end{equation*}
\end{proposition}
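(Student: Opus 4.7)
The plan is to unfold the definition of the quadratic module and rely on two pointwise nonnegativity facts: sum-of-squares polynomials are globally nonnegative on $\Real^n$, and the defining polynomials $p_1,\dots,p_m$ of $\mathcal K$ are nonnegative on $\mathcal K$ by construction. Combining these through the closure of $\Real_{\ge 0}$ under addition and multiplication should give the claim immediately.

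Concretely, I would first take an arbitrary $f(\seq x)\in\qm(p_1,\dots,p_m)$ and invoke the definition to write it as $f(\seq x)=\sigma_0(\seq x)+\sum_{i=1}^m\sigma_i(\seq x)p_i(\seq x)$ for some $\sigma_0,\sigma_1,\dots,\sigma_m\in\Sigma[\seq x]$. Next, I would fix an arbitrary $\seq x\in\mathcal K$ and examine the value $f(\seq x)$ term by term. Since each $\sigma_i$ is a sum of squares, it satisfies $\sigma_i(\seq x)\ge 0$ at every point of $\Real^n$, and in particular at our chosen $\seq x$. From the description of $\mathcal K$ in \cref{eq:K}, we also have $p_i(\seq x)\ge 0$ for every $i=1,\dots,m$. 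Hence each summand $\sigma_i(\seq x)p_i(\seq x)$ is a product of two nonnegative reals, so it is nonnegative, and adding the nonnegative constant term $\sigma_0(\seq x)$ preserves nonnegativity. Therefore $f(\seq x)\ge 0$ on $\mathcal K$.

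There is essentially no real obstacle here: the statement is a direct, one-line consequence of the two definitions (quadratic module and sum-of-squares polynomial) together with the elementary fact that sums and products of nonnegative reals are nonnegative. This is exactly why the authors flag the result as trivially holding; the value of \cref{prop:qm} lies in how it is later used as the \emph{sound} direction of the SOS encoding of barrier certificate conditions, with Putinar's theorem providing the much harder converse under the Archimedean assumption. Accordingly, my proof would be kept to a couple of lines and emphasize that only the \emph{global} nonnegativity of each $\sigma_i$ is needed, not any finer property of SOS polynomials.
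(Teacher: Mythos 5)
Your proof is correct and matches the paper's intent exactly: the paper states the result "trivially holds" after noting that SOS polynomials are nonnegative on $\Real^n$, and your argument simply unfolds that remark by expanding the quadratic-module decomposition and observing term-by-term nonnegativity on $\mathcal K$. Nothing more is needed.
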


An important result in real algebraic geometry is Putinar's Positivstellensatz, which states that, under the Archimedean condition, the quadratic module $\qm(p_1,\dots,p_m)$ contains all polynomials strictly positive over $\mathcal K$.
 
\begin{theorem}[Putinar's Positivstellensatz \cite{putinar93,lasserre09book}]
\label{thm:putinar}
    Given~$\mathcal K$ as defined in \cref{eq:K} and a polynomial~$f\in \mathbb R[\bm{x}]$, 
    if $\qm(p_1,\dots,p_m)$ is Archimedean,
    then 
    \begin{equation*}
        f(\seq x)> 0 \text{ over } \mathcal K \implies f(\seq x)\in \qm(p_1,\dots,p_m).
    \end{equation*}
    %$f(\seq x)>0$ over $\mathcal K$ implies that $f\in \qm(p_1,\dots,p_m)$.
\end{theorem}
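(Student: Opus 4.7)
The plan is to prove \cref{thm:putinar} by contradiction, using a Hahn--Banach separation argument combined with a Riesz--Haviland style representation for linear functionals positive on an Archimedean quadratic module. So assume $f(\seq x) > 0$ on $\mathcal K$ but $f \notin M$, where $M := \qm(p_1,\dots,p_m)$. Since $M$ is a convex cone in the real vector space $\Real[\seq x]$, I would separate $f$ from $M$ by a linear functional: produce $L : \Real[\seq x] \to \Real$ with $L(g) \ge 0$ for every $g \in M$ and $L(f) \le 0$. To make the separation work rigorously (since $\Real[\seq x]$ is infinite-dimensional), I would either work degree by degree on the finite-dimensional subspaces $\Real[\seq x]_{\le d}$ (where cones are closed in finite dimensions) or invoke a direct Hahn--Banach extension.

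Next I would exploit the Archimedean hypothesis $N - \norm{\seq x}^2 \in M$ to normalize and bound $L$. Since $\sigma + N - \norm{\seq x}^2 \in M$ for suitable $\sigma \in \Sigma[\seq x]$, the functional $L$ satisfies $L(\norm{\seq x}^2) \le N \cdot L(1)$; more generally, one can show by iterating that for every variable $x_i$ and every degree $k$, $|L(x_i^k)|$ is bounded by an explicit constant depending only on $N$, $k$ and $L(1)$. After rescaling so that $L(1) = 1$, this yields uniform moment bounds $|L(x^\alpha)| \le C_{|\alpha|}$ that force $L$ to extend continuously from $\Real[\seq x]$ to $C(\mathcal B)$ for any ball $\mathcal B \supseteq \mathcal K$ via the Stone--Weierstrass theorem.

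The key step is then a Haviland-type representation: the extended functional $L$ is a positive linear functional on $C(\mathcal B)$, so by the Riesz representation theorem it equals integration against a Borel probability measure $\mu$ on $\mathcal B$. I would show that $\supp(\mu) \subseteq \mathcal K$ by testing $L$ against functions of the form $h^2 \cdot \max(-p_i, 0)$ (approximated by polynomials); positivity of $L$ on $h^2 p_i + h^2(-p_i)$-style expressions built from $M$ forces $\mu$ to assign zero mass to $\{p_i < 0\}$, for each $i$. Given this, the contradiction is immediate:
\begin{equation*}
    0 \ge L(f) \;=\; \int_{\mathcal K} f(\seq x)\, d\mu(\seq x) \;>\; 0,
\end{equation*}
since $f$ is strictly positive on $\mathcal K$ and $\mu$ is a nonzero probability measure concentrated on $\mathcal K$.

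The main obstacle I anticipate is twofold: first, ensuring that the separation functional $L$ is well defined on all of $\Real[\seq x]$ (not just on a single truncation) while still being nonnegative on the infinite-dimensional cone $M$; second, and more delicate, upgrading $L$ from algebraic positivity on $M$ to genuine analytic continuity so that the Riesz representation applies. Both are driven by the Archimedean condition, which is why Putinar's theorem fails without it; carefully tracking how $N - \norm{\seq x}^2 \in M$ propagates into moment bounds and support restrictions is the technical heart of the argument.
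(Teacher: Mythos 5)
The paper does not actually prove this theorem: Putinar's Positivstellensatz is stated as a classical black-box result (citing Putinar 1993 and Lasserre's book), so there is no internal proof to compare against. Assessed on its own merits, your outline follows the standard functional-analytic route (Hahn--Banach separation plus a moment/representation argument) and you correctly flag the two places where the real work lies, so the overall strategy is sound.

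Two repairs are needed to close the gaps you identify. For the separation step, ``working degree by degree'' is not straightforward: the truncations $M \cap \Real[\seq x]_{\le d}$ need not be closed cones, and stitching the resulting finite-dimensional functionals into one consistent $L$ requires a separate compactness/diagonal argument. The clean route is to note that the Archimedean condition makes $1$ an order unit (algebraic interior point) of $M = \qm(p_1,\dots,p_m)$: for every $q \in \Real[\seq x]$ there is $N_q$ with $N_q \pm q \in M$. This lets the Eidelheit--Kakutani separation theorem apply directly in the infinite-dimensional space, and it also forces $L(1) > 0$ (else the order-unit property gives $L \equiv 0$), which you tacitly need in order to normalize to a probability measure. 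For the representation step, the moment bounds $|L(\seq x^\alpha)| \le C^{|\alpha|}$ control $L(p)$ in terms of the \emph{coefficients} of $p$, not $\|p\|_{\infty,\mathcal B}$, so they do not by themselves yield a continuous extension of $L$ to $C(\mathcal B)$; Stone--Weierstrass cannot be invoked as directly as you suggest. The standard fix is the GNS construction: the PSD form $(p,q) \mapsto L(pq)$ gives a Hilbert space on which multiplication by each $x_i$ is a \emph{bounded} commuting self-adjoint operator (boundedness is precisely where Archimedeanity enters, via $L\bigl((N - x_i^2)q^2\bigr) \ge 0$), and the joint spectral measure of this family is the desired $\mu$, automatically compactly supported. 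Your support-restriction argument then goes through, and $0 \ge L(f) = \int f\,d\mu > 0$ completes the contradiction.
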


Here, the condition ``$\qm(p_1,\dots,p_m)$ is Archimedean" can be intuitively understood as $\mathcal K$ in \cref{eq:K} is bounded.
In one direction, if $\qm(p_1,\dots,p_m)$ is Archimedean, by \cref{prop:qm}, we have $N-\|\seq{x}\|^2\ge 0$ over $\mathcal K$, hence $\mathcal{K}$ is bounded.
In the other direction, when $\mathcal{K}$ is bounded within a ball $\{\seq x\in\Real^n \mid N-\norm{\seq x}^2\ge 0\}$, then we can assume a redundant constraint~$p_{m+1}=N-\norm{\seq x}^2$ and the new quadratic module $\qm(p_1,\dots,p_m,p_{m+1})$ is Archimedean.
%Note that \cref{prop:qm} is not subject to this restriction.
In general, note that \cref{prop:qm} does not necessarily imply that $\mathcal{K}$ is bounded.

\section{Problem Formulation}
\label{sec:problem}
In this section, we formally define the barrier certificate synthesis problem of interest, and discuss the relation between the sound and the complete sum-of-squares characterization of polynomial barrier certificates over bounded domains. 
The majority of the existing literature, such as \cite{prajna04hscc,kong13cav,dai17jsc,sogokon18fm}, primarily focus on the sound characterization.
As far as we are aware, the complete characterization is only mentioned in \cite{wang22iac}.
Subsequently, we clarify the connection between these two characterizations, which can be considered as a minor contribution.

\paragraph{Differential Dynamical Systems.} 
We consider a class of dynamical systems featuring
differential dynamics governed by ordinary differential equations (ODEs) of autonomous type:
\begin{equation}\label{eq:system}
    \dot{ \seq x} = \seq f (\seq x)
\end{equation}
where $\seq x\in \Real^n$ is the state vector, 
$\dot{\seq x}$ denotes its temporal derivative $dx/dt$, and $\seq f:\Real^n\to \Real^n$ is a polynomial vector field, i.e., each component $f_i$ of $\seq{f}$ is a polynomial.
%a Lipschitz continuous .
Since a polynomial vector field is locally Lipschitz continuous, ODE~\eqref{eq:system} admits an unique \emph{solution} (or \emph{trajectory}), denoted as $\traj_{\seq x_0}: \Real_{\ge0} \to \Real^n$, from any initial state $\seq x_0\in \Real^n$, such that
(1) $\traj_{\seq x_0}(0)=\seq x_0$ (2) for any $t'\in\Real_{\ge 0}$, $\frac{\mathrm d \traj_{\seq x_0}}{\mathrm d t}\big \vert_{t=t'} = \seq f(\traj_{\seq x_0}(t')).$
% \begin{enumerate}
    % \item $\traj_{\seq x_0}(0)=\seq x_0$
    % \item $\frac{\mathrm d \traj_{\seq x_0}}{\mathrm d t}\big \vert_{t=t'} = \seq f(\traj_{\seq x_0}(t')), ~\forall t'\in\Real_{\ge 0}.$
% \end{enumerate}

\paragraph{Safety Verification Problems.}
Given dynamical system \cref{eq:system} with domain $\domain \subseteq \Real^n$, initial set $\init \subset \domain$, and unsafe set $\unsafe \subset \domain$,
the \emph{safety verification problem} asks whether $\unsafe$ is reachable from any state in $\init$ within $\domain$.
Formally, let $\reach$ denote the reachable set,
\begin{equation*}
    \reach \defeq \Set{\seq x\in \domain}{
    \; \exists t\in \Real_{\ge 0}, \exists \seq x_0 \in \init.
    ~\seq x = \traj_{\seq x_0}(t)},
    % for simplicity, just delete the domain requirement here
    % \wedge \big(\forall \tau \in [0, t].~\traj_{\seq x_0}(\tau)\in \domain \big)},    
\end{equation*}
where we assume that a trajectory will never leave the domain.
The system is said to be \emph{safe} if $\unsafe \cap \reach=\emptyset$, and \emph{unsafe} otherwise. 

In this paper, we restrict our focus to the case when $\domain$, $\init$, and $\unsafe$ are closed basic semialgebraic sets described by
\begin{align*}
    \domain &\defeq \Set{\seq x \in \Real^n}{g^\domain_{i}(\seq x)\ge 0, \text{ for } i= 1,\dots, m_x},\\
    \init &\defeq \Set{\seq x\in \domain}{g^\init_{i}(\seq x)\ge 0, \text{ for } i= 1,\dots, m_i},\\
    \unsafe &\defeq \Set{\seq x\in \domain}{g^\unsafe_{i}(\seq x)\ge 0, \text{ for } i= 1,\dots, m_u}.
\end{align*}
%where $g^\init_i\in \Real[\seq x]$ and $g^\unsafe_j \in \Real[\seq x]$ for each $i$ and $j$.

% We would like to note that the safety verification problem can be readily addressed when the reachable set $\reach$ is computable.
% Nevertheless, for the majority of nonlinear systems, the direct computation, or even approximate estimation, of reachable sets typically proves intractable.
% This leads to the notion of invariant sets.

\paragraph{Invariants.}

A \emph{differential invariant} is a subset $\Phi \subseteq \domain$ such that any trajectory starting from $\Phi$ stays within $\Phi$ forever, i.e.,
\begin{equation*}
    \forall \seq x_0 \in \Phi, \forall t\in \Real_{\ge 0}.~\traj_{\seq x_0}(t)\in \Phi.
\end{equation*}
Utilizing this concept, we can verify the safety of a system without explicitly computing the reachable set, which is typically intractable for the majority of nonlinear systems.
The idea therein is to find a differential invariant~$\Phi \subseteq \domain$ such that  $\init\subseteq \Phi $ and $\unsafe\subseteq \domain\backslash \Phi$. 
According to the definition, the differential invariant $\Phi$ serves as an over-approximation of the reachable set~$\reach$, thereby substantiating safety of the system.

\paragraph{Barrier Certificates.}
Barrier certificates encapsulate the conditions requisite for a zero sub-level set of the form $\Set{\seq x\in \Real^n}{B(\seq x)\le 0}$
% \begin{equation*}
%     \Set{\seq x\in \Real^n}{B(\seq x)\le 0}, 
% \end{equation*}
to become a differential invariant, 
where $B\in \mathcal{C}^1(\Real^n)$.
% \revise{
For the ease of explanation, we focus on exponential-type barrier certificates and refer to them as barrier certificates for simplicity.
The technique presented in this paper can be readily extended to other types of barrier certificates \cite{dai17jsc,sogokon18fm,wang22iac} and hybrid systems (systems containing discrete transitions and continuous evolution) \cite{prajna04hscc}.
% , as long as \cref{thm:putinar} is used to encode the barrier certificate conditions into constraints.
% }

% \revcomment{The technique presented in this paper can be readily extended to other types of barrier certificates, Say in 1-2 sentences how this extension works.}

\begin{theorem}[Exponential-type Barrier Certificates, modified from \cite{kong13cav}]\label{thm:bc}
Given the system \eqref{eq:system} with sets~$\domain$,~$\init$, and~$\unsafe$. 
For any $\lambda\in \Real$, the system is safe if there exists an exponential-type barrier certificate, namely a real-valued function $B(\seq x)\in \mathcal C^1(\Real^n)$ satisfying the following conditions
\begin{align}
    &\forall \seq x\in \init.~B(\seq x)\le 0, \label{eq:bc-init}\\
    &\forall \seq x\in \unsafe.~B(\seq x) \ge \epsilon_e, \label{eq:bc-unsafe}\\
    &\forall \seq x\in \domain.~\Lie_{\seq f} B(\seq x) - \lambda B(\seq x)\le  0, \label{eq:bc-inv}
\end{align}
for some real constant $\epsilon_e\in \Real_{>0}$, where $\Lie_{\seq f}p(\seq x) \defeq \langle \frac{\partial}{\partial \seq x}  p(\seq x),  \seq f(\seq x)\rangle $ is the Lie derivative of $p(\seq x)$ over the vector filed $\seq{f}$.  
\end{theorem}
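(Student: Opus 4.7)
The plan is to prove the theorem by contradiction: assume some trajectory starting in $\init$ reaches $\unsafe$, and then exploit the differential inequality \cref{eq:bc-inv} to derive that $B$ stays nonpositive along the trajectory, contradicting \cref{eq:bc-unsafe}.

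Concretely, suppose for contradiction there exists $\seq x_0\in\init$ and $t^*\in\Real_{\ge 0}$ with $\traj_{\seq x_0}(t^*)\in\unsafe$. Define the scalar function $b(t)\defeq B(\traj_{\seq x_0}(t))$; since $B\in\mathcal{C}^1(\Real^n)$ and $\traj_{\seq x_0}$ is differentiable, $b$ is continuously differentiable on $\Real_{\ge 0}$. By the chain rule together with the ODE \eqref{eq:system},
\begin{equation*}
b'(t) \;=\; \bigl\langle \tfrac{\partial B}{\partial \seq x}(\traj_{\seq x_0}(t)),\,\seq f(\traj_{\seq x_0}(t))\bigr\rangle \;=\; \Lie_{\seq f}B(\traj_{\seq x_0}(t)).
\end{equation*}
Because trajectories remain in $\domain$ by assumption, condition \cref{eq:bc-inv} yields the linear differential inequality $b'(t)\le \lambda\, b(t)$ for all $t\in[0,t^*]$.

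The main analytic step is to integrate this inequality. Introducing the integrating factor $e^{-\lambda t}$, we observe
\begin{equation*}
\frac{d}{dt}\bigl(e^{-\lambda t} b(t)\bigr) \;=\; e^{-\lambda t}\bigl(b'(t)-\lambda b(t)\bigr) \;\le\; 0,
\end{equation*}
so the map $t\mapsto e^{-\lambda t}b(t)$ is monotonically nonincreasing. Evaluating at $t=t^*$ and $t=0$ gives $e^{-\lambda t^*}b(t^*)\le b(0)$, i.e., $b(t^*)\le e^{\lambda t^*} b(0)$. Since $\seq x_0\in\init$, condition \cref{eq:bc-init} supplies $b(0)=B(\seq x_0)\le 0$, and $e^{\lambda t^*}>0$, hence $b(t^*)\le 0$. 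On the other hand, $\traj_{\seq x_0}(t^*)\in\unsafe$ combined with \cref{eq:bc-unsafe} forces $b(t^*)\ge\epsilon_e>0$, a contradiction.

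There is no real obstacle here: the argument is a classical Grönwall-type estimate, and the choice of $\lambda\in\Real$ is immaterial because the integrating factor handles every sign. The only subtlety worth stating explicitly is the standing assumption that trajectories do not leave $\domain$, which ensures \cref{eq:bc-inv} is applicable throughout $[0,t^*]$; without this one would need an additional argument at the boundary of $\domain$. Note that the strict positivity $\epsilon_e>0$ in \cref{eq:bc-unsafe} is what makes the contradiction sharp and cannot be relaxed to $B\ge 0$ on $\unsafe$ in general.
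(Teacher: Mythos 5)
Your proof is correct, and it is the standard Grönwall/comparison-lemma argument for exponential-type barrier certificates. The paper itself does not supply a proof of \cref{thm:bc}; it cites \cite{kong13cav} and only discusses the modification of \cref{eq:bc-unsafe}, so there is no paper proof to compare against line by line — but the argument in \cite{kong13cav} proceeds exactly as you do: compose $B$ with the flow, apply the chain rule, multiply by the integrating factor $e^{-\lambda t}$, and conclude $B(\traj_{\seq x_0}(t))\le e^{\lambda t}B(\seq x_0)\le 0$, contradicting positivity on $\unsafe$. Your handling of the standing assumption that trajectories remain in $\domain$ is also the right thing to flag, since it is what makes \cref{eq:bc-inv} applicable on all of $[0,t^*]$.

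One small remark on your closing sentence: for \emph{this} safety argument the strict inequality $B>0$ on $\unsafe$ already suffices (you would still get $0\ge b(t^*)>0$), so the quantitative gap $\epsilon_e>0$ is not what saves the contradiction. As the paper explains around \cref{ex:epsilon}, the role of $\epsilon_e$ is to make the unsafe condition robust on an \emph{unbounded} $\unsafe$, which matters for the SOS characterization and completeness results (Theorems~\ref{thm:bounded}, \ref{thm:unbounded-poly}, \ref{thm:unbounded-semi}), not for the soundness proof of \cref{thm:bc} itself. You are right, though, that the non-strict $B\ge 0$ on $\unsafe$ would indeed be too weak.
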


% Given a polynomial~$p(\seq x)\in \Real[\seq x]$, the Lie derivative of $p(\seq x)$ w.r.t. a vector filed $\seq f$ is denoted by $\Lie_{\seq f}p(\seq x) \defeq \langle \frac{\partial}{\partial \seq x}  p(\seq x),  \seq f(\seq x)\rangle $. 

The difference between our \cref{thm:bc} and its original formulation in \cite{kong13cav} lies in \cref{eq:bc-unsafe}, which was written as 
\begin{equation}\label{eq:bc-unsafe-old}
    \forall \seq x\in \unsafe.~B(\seq x) > 0. \tag{4'}
\end{equation}
When the unsafe region $\unsafe$ is bounded (compact), the two condition \cref{eq:bc-unsafe} and \cref{eq:bc-unsafe-old} coincide, as a continuous function over a compact set always attains a minimum.
However, when $\unsafe$ is unbounded, our formulation is stricter in the sense that $\init$ and $\unsafe$ can not be arbitrarily close, otherwise we would be unable to distinguish between them, as shown in the following \cref{ex:epsilon}. 
In theory, $\epsilon_e$ can be any real constant in $\Real_{>0}$, and the corresponding $B(\seq{x})$ will be equivalent up to a constant factor.
% In practice, this requirement is considered reasonable because we would like a barrier certificate to be robust, 

\begin{example}\label{ex:epsilon}
Consider a system $\seq{f}(x_1,x_2)=(x_1, 0)$ with $\domain=\Real^2$, $\init=\{(x_1,x_2)\mid x_1x_2+1\le 0, x_1\le 0\}$, and $\unsafe=\{(x_1,x_2)\mid x_1x_2-1\ge 0, x_1\ge 0\}$.
The function $B(x_1,x_2)=x_1$ is not a valid barrier certificate according to our definition, as the condition \cref{eq:bc-unsafe} is not satisfiable for any $\epsilon_e>0$ (though when $\epsilon_e=0$ \cref{eq:bc-unsafe-old} is satisfied).
In other words, the sets $\init$ and $\unsafe$ are indistinguishable in practice when $x_2$ goes to infinity, and our \cref{thm:bc} rules out such cases.
\end{example}

To ensure computational tractability, the barrier certificate $B(\seq x)$ is commonly constrained to polynomial forms. 
One of the prevailing computational methods for synthesizing $B(\seq x)\in \Real[\seq x]$ is based on the sum-of-squares optimization.
Now we present the sound and complete sum-of-squares characterizations of polynomial barrier certificate over bounded domains.

\begin{theorem}[Bounded Case]
\label{thm:bounded}
Let $\domain$, $\init$, and $\unsafe$ be bounded, i.e., the corresponding quadratic module is Archimedean.
Given $\lambda\in \Real$ and $\epsilon_e\in \Real_{>0}$,  
consider the following constraints with parameter $\epsilon$,
\begin{equation}\label{eq:bc-sos}
    \begin{aligned}
        & -B(\seq x)+\epsilon = \sigma_0^\init + \sum_{i=1}^{m_i} g_i^\init (\seq{x})\sigma_i^\init\\
        & B(\seq{x}) - \epsilon_e + \epsilon = \sigma_0^\unsafe + \sum_{i=1}^{m_u} g_i^\unsafe (\seq{x})\sigma_i^\unsafe\\
        & \lambda B(\seq x) - \Lie_{\seq f} B(\seq x) +\epsilon = \sigma_0^\domain + \sum_{i=1}^{m_x} g_i^\domain (\seq{x})\sigma_i^\domain\\
        &\sigma_0^\init,\dots, \sigma_{m_i}^\init, \sigma_0^\unsafe,\dots, \sigma_{m_u}^\unsafe, \sigma_0^\domain,\dots, \sigma_{m_x}^\domain \in \Sigma[\seq{x}].
    \end{aligned}
\end{equation}    
When $\epsilon=0$, \cref{eq:bc-sos} gives a sound characterization of polynomial barrier certificates, i.e., any solution $B(\seq{x})\in \Real[\seq{x}]$ to the above constraints is a barrier certificate.
When $\epsilon>0$, \cref{eq:bc-sos} gives a complete characterization of polynomial barrier certificates, i.e., any barrier certificate $B(\seq{x})\in \Real[\seq{x}]$ satisfies \cref{eq:bc-sos}.
\end{theorem}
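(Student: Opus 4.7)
The plan is to treat soundness and completeness as two independent one-line applications of the two algebraic facts recalled in \cref{sec:pre}: Proposition~1 (trivial direction of Putinar) for soundness, and \cref{thm:putinar} (Putinar's Positivstellensatz) for completeness. In both cases, I would process the three equations of \cref{eq:bc-sos} in parallel, one per barrier-certificate condition in \cref{thm:bc}.

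For the soundness direction, assume $\epsilon = 0$ and that some $B\in \Real[\seq{x}]$ together with SOS multipliers satisfies \cref{eq:bc-sos}. The right-hand side of the first equation witnesses $-B(\seq{x}) \in \qm(g_1^\init,\dots,g_{m_i}^\init)$, so \cref{prop:qm} yields $-B(\seq{x})\ge 0$ on $\init$, which is \cref{eq:bc-init}. The second equation analogously gives $B(\seq{x}) - \epsilon_e \ge 0$ on $\unsafe$, i.e., \cref{eq:bc-unsafe}, and the third gives $\lambda B(\seq{x}) - \Lie_{\seq f} B(\seq{x}) \ge 0$ on $\domain$, which rearranges to \cref{eq:bc-inv}. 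Note that this half does not use the Archimedean hypothesis, which matches the usual observation that the sound characterization is available even in the unbounded setting.

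For the completeness direction, assume $\epsilon > 0$ and let $B\in \Real[\seq{x}]$ be any polynomial barrier certificate in the sense of \cref{thm:bc}. I would check that each left-hand side of \cref{eq:bc-sos} is \emph{strictly} positive on the associated set: \cref{eq:bc-init} gives $-B(\seq{x}) + \epsilon \ge \epsilon > 0$ on $\init$; \cref{eq:bc-unsafe} gives $B(\seq{x}) - \epsilon_e + \epsilon \ge \epsilon > 0$ on $\unsafe$; and \cref{eq:bc-inv} gives $\lambda B(\seq{x}) - \Lie_{\seq f} B(\seq{x}) + \epsilon \ge \epsilon > 0$ on $\domain$. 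Since the three quadratic modules are Archimedean by the boundedness hypothesis, three independent invocations of \cref{thm:putinar} produce SOS multipliers certifying exactly the equations in \cref{eq:bc-sos}.

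The only subtle point worth flagging in the write-up is the role of the slack $\epsilon$: the barrier-certificate inequalities \cref{eq:bc-init} and \cref{eq:bc-inv} are non-strict, so without the $+\epsilon$ term the relevant polynomials would only be nonnegative rather than strictly positive on the respective sets, and \cref{thm:putinar} would not be directly applicable. The strict positivity required by Putinar is manufactured uniformly by choosing any $\epsilon > 0$. This is the entire content of the theorem; there is no serious obstacle, and the Archimedean hypothesis is used precisely once per equation in the completeness direction.
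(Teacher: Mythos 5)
Your proposal is correct and follows exactly the route of the paper's own (one-line) proof: soundness is \cref{prop:qm} applied to each of the three identities, and completeness is \cref{thm:putinar} applied after noting that the $+\epsilon$ slack upgrades the non-strict barrier-certificate inequalities to strict positivity on each (Archimedean) set. Your expanded write-up, including the remark that the soundness half needs no Archimedean hypothesis, is a faithful elaboration of what the paper leaves implicit.
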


\begin{proof}
    \cref{prop:qm} and \cref{thm:putinar} entail soundness and completeness, respectively.
\end{proof}

In fact, in most practical cases, \cref{eq:bc-sos} with $\epsilon=0$ can be viewed as a \emph{sound and complete} characterization.
In this situation, completeness follows from the so-called ``finite convergence property'' of \cref{thm:putinar}, which requires the underlying basic semialgebraic sets $\domain$, $\init$, and $\unsafe$ to satisfy some side conditions that are generally true \cite{nie12jsc}.
For now, we do not go deep into these details and just consider that soundness and completeness are dependent on the parameter~$\epsilon$.

Unfortunately, when the domain $\domain$ becomes unbounded, \cref{eq:bc-sos} with $\epsilon>0$ is no longer a complete characterization due to the violation of the Archimedean condition, while the $\epsilon=0$ case is still sound. 
Consequently, we can solely rely on the sound characterization to synthesize barrier certificates, which may fail to identify potential solutions as in the following example.
So the problem considered in this paper is, \textbf{can we derive a complete characterization similar to \cref{eq:bc-sos} for the unbounded cases?}

\begin{example}
    Consider an 1-dimensional system $f(x_1)=x_1$ with $\domain = \Real$, $\init=\{x_1\mid x_1^3\ge 0\}$, and $\unsafe=\{x_1\mid x_1+1\le 0 \}$, then $B(x_1)=-x_1$ is a barrier certificate but is not a solution to \cref{eq:bc-sos} with $\epsilon=0$.
    To see this, we only need to show that there exists no sum-of-squares polynomials $\sigma^\init_0(x_1), \sigma^\init_1(x_1)\in \Sigma[x_1]$ such that $x_1=\sigma^\init_0(x_1)+x_1^3\sigma^\init_1(x_1)$.
    Suppose we have such an expression,
    by setting $x_1=0$, we have $\sigma^\init_0(0)=0$.
    Assume that $\sigma^\init_0$ can be expressed as $\sigma^\init_0(x_1)= \sum_i p_i^2(x_1)$, then $\sigma^\init_0(0)=0$ implies that $p_i(0)=0$ for each $i$, so each $p_i$ factors as $p_i(x_1)=x_1 p_i'(x_1)$.
    Therefore, both $\sigma^\init_0(x_1)$ and $x_1^3\sigma^\init_1(x_1)$ contain no terms of degree less than $2$, which is impossible.
\end{example}

\section{A Complete Characterization of Polynomial Barrier Certificates}
\label{sec:poly}

In this section, we give an affirmative answer to the question raised above.
The tool we use is a newly introduced technique in the optimization community, called \emph{homogenization} \cite{huang23mp}, to transform an unbounded optimization problem into a bounded one.
In the following, we utilize the homogenization technique to derive a complete characterization for polynomial barrier certificates purely from a constraint-solving perspective.
In the next section, we will take a different view of this technique and consider a family of non-polynomial barrier certificates that arise naturally.

We first fix some notations.
Given $\seq{x}\in \Real^n$, let $x_0$ be a fresh variable.
% and denote~$\homo{\seq{x}}=(x_0,\seq{x})$.
For a polynomial~$p(\seq{x})\in\Real[\seq{x}]$ of degree~$d$, its homogenization w.r.t. variable~$x_0$ is a new polynomial $\homo{p}\in \Real[x_0,\seq{x}]$ defined by $\homo{p}(x_0, \seq x) \defeq x_0^d p(x_1/x_0,\dots,x_n/x_0)$.
For example, let $f(x_1,x_2)=x_1^2+x_2+1$, then $\homo{f}(x_0,x_1,x_2)
% =x_0^2 \cdot \big((x_1/x_0)^2+(x_2/x_0)+1\big)
=x_1^2+x_2x_0+x_0^2$.
%let $p^\infty(\seq x)$ denote the highest degree part of~$p(\seq x)$, and .
Suppose $\mathcal K\subseteq \Real^n$ is a semialgebraic set as described in~\cref{eq:K}, we introduce two related sets in $\Real^{n+1}$ as follows:
\begin{alignat*}{2}
    &\homo{\mathcal K}_{>0}&& \defeq 
    \Set{ (x_0, \seq{x})}{\homo{p}_1(x_0, \seq{x})\ge0, \dots, \homo{p}_m(x_0, \seq{x})\ge0, \norm{\seq{x}}^2+x_0^2=1, x_0> 0},\\
    &\homo{\mathcal K} &&\defeq 
    \Set{ (x_0, \seq{x})}{\homo{p}_1(x_0, \seq{x})\ge0, \dots, \homo{p}_m(x_0, \seq{x})\ge0, \norm{\seq{x}}^2+x_0^2=1, x_0\ge 0}.
\end{alignat*}
One can see that there exists an one-to-one mapping between $\homo{\mathcal K}_{>0}$ and $\mathcal{K}$:
\begin{lemma}\label{lem:homo}
Let $\mathcal K$ be as in \cref{eq:K}. Then $\seq{x} \in \mathcal{K}$ if and only if
  \begin{equation*}
    \left(\frac{1}{\sqrt{1+\|\seq{x}\|^2}}, \frac{x_1}{\sqrt{1+\|\seq{x}\|^2}}, \ldots, \frac{x_n}{\sqrt{1+\|\seq{x}\|^2}}\right)\in \homo{\mathcal K}_{>0}.
  \end{equation*}
  Moreover, $(x_0, \seq{x}) \in \homo{\mathcal K}_{>0}$ if and only if 
  $(\frac{x_1}{\sqrt{1-\|\seq{x}\|^2}}, \ldots, \frac{x_n}{\sqrt{1-\|\seq{x}\|^2}})\in \mathcal K$.
\end{lemma}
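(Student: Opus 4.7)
The plan is to verify directly that the two maps described in the statement are well-defined mutual inverses between $\mathcal{K}$ and $\homo{\mathcal K}_{>0}$, relying on the basic identity
\begin{equation*}
    \homo{p}(x_0,\seq{x}) \;=\; x_0^{d}\, p(x_1/x_0,\dots,x_n/x_0), \qquad x_0>0,
\end{equation*}
where $d = \deg(p)$. Since $x_0^{d}>0$ on the region of interest, the sign of $\homo{p}(x_0,\seq{x})$ coincides with the sign of $p(x_1/x_0,\dots,x_n/x_0)$. This is the only analytic content needed; everything else is bookkeeping on the normalization constraint $\|\seq{x}\|^2+x_0^2=1$.

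For the first equivalence, given $\seq{x}\in\Real^n$, I set $c \defeq \sqrt{1+\|\seq{x}\|^2}$ and consider the candidate point $(x_0',\seq{x}') \defeq (1/c,\,\seq{x}/c)$. A direct computation yields $\|\seq{x}'\|^2 + (x_0')^2 = (\|\seq{x}\|^2+1)/c^2 = 1$ and $x_0' = 1/c > 0$, so the two extra constraints defining $\homo{\mathcal K}_{>0}$ are automatic. Applying the identity above with $x_0 = 1/c$ gives
\begin{equation*}
    \homo{p}_i(x_0',\seq{x}') \;=\; (1/c)^{d_i}\, p_i\!\left(\frac{x_1/c}{1/c},\dots,\frac{x_n/c}{1/c}\right) \;=\; (1/c)^{d_i}\, p_i(\seq{x}),
\end{equation*}
so $\homo{p}_i(x_0',\seq{x}')\ge 0$ iff $p_i(\seq{x})\ge 0$ (for each $i$). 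This establishes both implications of the first equivalence.

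For the second equivalence, I start from an arbitrary point $(x_0,\seq{x})\in\homo{\mathcal K}_{>0}$ and set $\seq{y} \defeq \seq{x}/x_0$, which is well defined since $x_0>0$. The identity again gives $p_i(\seq{y}) = \homo{p}_i(x_0,\seq{x})/x_0^{d_i}\ge 0$ for each $i$, hence $\seq{y}\in\mathcal{K}$. From $\|\seq{x}\|^2+x_0^2=1$ and $x_0>0$ I obtain $x_0=\sqrt{1-\|\seq{x}\|^2}$, so $\seq{y}$ coincides with the vector $(x_1/\sqrt{1-\|\seq{x}\|^2},\dots,x_n/\sqrt{1-\|\seq{x}\|^2})$ appearing in the statement. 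Conversely, if this vector lies in $\mathcal{K}$, the same identity run backwards recovers $\homo{p}_i(x_0,\seq{x})\ge 0$.

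Finally, a brief check shows the two maps in the statement compose to the identity in either direction, confirming the claimed bijective correspondence. There is no real obstacle here; the only point that deserves care is to ensure $x_0>0$ throughout so that division by $x_0$ and the identification of the sign of $x_0^{d_i}$ are legitimate—the definition of $\homo{\mathcal K}_{>0}$ is exactly what makes this free.
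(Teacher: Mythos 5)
Your proof is correct, and it is the natural direct verification: unfold the definition $\homo{p}(x_0,\seq{x}) = x_0^{d}p(\seq{x}/x_0)$ (valid for $x_0>0$), note $x_0^{d}>0$ so the signs agree, and check the two normalization constraints $\|\seq{x}\|^2+x_0^2=1$, $x_0>0$. The paper treats \cref{lem:homo} as immediate and does not spell out a proof, so there is nothing to contrast against; your argument supplies exactly the elementary bookkeeping that was left implicit.

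One small point worth making explicit, for your own clarity rather than as a gap: in the second biconditional the left-hand side depends on $x_0$ while the right-hand side does not, so the statement only makes sense under the implicit assumption that $(x_0,\seq{x})$ already satisfies $\|\seq{x}\|^2+x_0^2=1$ and $x_0>0$ (so $x_0=\sqrt{1-\|\seq{x}\|^2}$). You use this silently when you write ``the same identity run backwards recovers $\homo{p}_i(x_0,\seq{x})\ge 0$''---the membership $(x_0,\seq{x})\in\homo{\mathcal K}_{>0}$ also requires those two side conditions, which you should acknowledge are given. With that noted, the argument is complete.
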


%\begin{proof}
%    Straightforward to verify.
%\end{proof}

% the projection map
% \begin{equation*}
%     \varphi: \Set{(x_0, \seq{x})}
%     {x_0>0, \norm{\seq{x}}^2+x_0^2=1} \to \Real^n, (x_0,\seq x)\mapsto \frac{\seq x}{x_0}
% \end{equation*} 
% defines an one-to-one mapping between points in $\homo{\mathcal K}$ with $x_0\neq 0$ and $\mathcal K$.
% By employing the inverse mapping~$\varphi^{-1}$, 

Utilizing the above lemma, we can transform a potentially \emph{unbounded} set into a \emph{bounded} set located on the unit sphere within $\Real^{n+1}$. 
Moreover, note that points with $x_0=0$ in $\Real^{n+1}$ correspond to points at infinity in $\Real^n$. 
This encourages us to take the points at infinity into consideration.
The related concept is captured by the following definition.

% \sfcomment{suggest to move this definition to thm4, and add remarks about this condition}
% I think it is better to state it as an independent definition
\begin{definition}[closed at infinity \cite{nie12jsc}] 
A basic semialgebraic set $\mathcal K$ is closed at infinity if
$cl(\homo{\mathcal K}_{>0})=\homo{\mathcal K}$,
%\begin{equation*}
%    cl(\homo{\mathcal K}_{>0})=\homo{\mathcal K},
%\end{equation*}
where $cl(\homo{\mathcal K}_{>0})$ denotes the closure of $\homo{\mathcal K}_{>0}$.
\end{definition}

We would like to emphasize that being closed at infinity is a generic property for semialgebraic sets \cite{guo14jgo}, and its manifestation may be contingent upon the selection of descriptive polynomials.
%as shown in \cref{ex:closed}. 
For example, 
let $S_1 = \Set{(x_1,x_2)}{x_1-x_2^2\ge 0}$, then $S_1$ is not closed at infinity because 
    \begin{equation*}
        (0,-1,0) \not\in cl(\homo{S_1}_{>0})\quad \text{ and }\quad 
        (0,-1,0) \in \homo{S_1}. 
    \end{equation*}
However, by adding a redundant polynomial inequality $x_1\ge 0$ in $S_1$, we can check $S_2= \Set{(x_1,x_2)\in\Real^2}{x_1-x_2^2\ge 0, x_1\ge 0} (=S_1)$ is closed at infinity.
In this paper, we assume that  $\init$, $\unsafe$, and $\domain$ are all closed at infinity, which is purely a technical assumption.
To check whether a semialgebraic set is closed at $\infty$, one can use \cite[Thm.~2.11]{guo14jgo}.
% \sfcomment{Does there exists any results that automatically check ``close at infinity"?}

% \begin{example}\cite{huang23mp}\label{ex:closed}
    % Let $S_1 = \Set{(x_1,x_2)\in\Real^2}{x_1-x_2^2\ge 0}$, then $S_1$ is not closed at infinity because 
    % \begin{equation*}
    %     (0,-1,0) \not\in cl(\homo{S_1}_{>0})\quad \text{ and }\quad 
    %     (0,-1,0) \in \homo{S_1}. 
    % \end{equation*}
    % However, by adding a redundant polynomial inequality $x_1\ge 0$ in $S_1$, we can check $S_2= \Set{(x_1,x_2)\in\Real^2}{x_1-x_2^2\ge 0, x_1\ge 0} (=S_1)$ is closed at infinity.    
% \end{example}

The following theorem lies at the core of the homogenization technique.
\begin{theorem}[{\cite[Lem~3.2]{huang23mp}}]\label{thm:homo}
    When a basic semialgebraic set $\mathcal K$ is closed at infinity, for any polynomial $f\in \Real[\seq{x}]$ 
    \begin{equation*}
        f(\seq x)\ge 0 \text{ over } \mathcal K \iff \homo{f}(x_0, \seq{x}) \ge 0 \text{ over } \homo{\mathcal K}.
    \end{equation*}
\end{theorem}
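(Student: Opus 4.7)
The plan is to prove the two implications separately, relying on the bijection from \cref{lem:homo} together with the algebraic identity
\begin{equation*}
    \homo{f}(x_0,\seq{x}) \;=\; x_0^d\, f\!\left(\frac{x_1}{x_0},\dots,\frac{x_n}{x_0}\right),\qquad d=\deg f,
\end{equation*}
which is valid whenever $x_0\neq 0$. The closed-at-infinity hypothesis will be invoked only once, in the forward direction, and solely to handle the fiber $\{x_0=0\}\cap\homo{\mathcal{K}}$, which is precisely where the bijection of \cref{lem:homo} does not reach.

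For the ($\Leftarrow$) direction, I would take an arbitrary $\seq{x}\in\mathcal{K}$ and set $t=1/\sqrt{1+\|\seq{x}\|^2}>0$. By \cref{lem:homo}, the rescaled point $(t,tx_1,\dots,tx_n)$ lies in $\homo{\mathcal{K}}_{>0}\subseteq\homo{\mathcal{K}}$, and the identity above gives $\homo{f}(t,tx_1,\dots,tx_n)=t^d f(\seq{x})$. The hypothesis $\homo{f}\ge 0$ on $\homo{\mathcal{K}}$ combined with $t^d>0$ then yields $f(\seq{x})\ge 0$. No appeal to closed-at-infinity is needed for this direction.

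For the ($\Rightarrow$) direction, I would split $\homo{\mathcal{K}}$ into $\homo{\mathcal{K}}_{>0}$ and the fiber $\homo{\mathcal{K}}\setminus\homo{\mathcal{K}}_{>0}=\{(0,\seq{x})\in\homo{\mathcal{K}}\}$. On the first piece, \cref{lem:homo} places $(x_1/x_0,\dots,x_n/x_0)$ inside $\mathcal{K}$; the hypothesis $f\ge 0$ on $\mathcal{K}$ then yields $f(\seq{x}/x_0)\ge 0$, and multiplying through by $x_0^d>0$ gives $\homo{f}(x_0,\seq{x})\ge 0$. To extend the nonnegativity to the fiber at infinity I would invoke the closed-at-infinity assumption, which states exactly that $\homo{\mathcal{K}}=\text{cl}(\homo{\mathcal{K}}_{>0})$: since $\homo{f}$ is a polynomial and hence continuous, and since it is nonnegative on the (now dense) subset $\homo{\mathcal{K}}_{>0}$, it must remain nonnegative on the entire closure.

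The main (and essentially only) subtle step is the fiber $\{x_0=0\}$ in the forward direction. Without the closed-at-infinity hypothesis this step can genuinely fail: the homogenization $\homo{\mathcal{K}}$ may contain stray boundary points that are not limits of $\homo{\mathcal{K}}_{>0}$ (as the example $S_1=\{x_1-x_2^2\ge 0\}$ in the excerpt illustrates), and on such points $\homo{f}$ could be strictly negative even when $f\ge 0$ on $\mathcal{K}$. The closed-at-infinity hypothesis is precisely the ingredient that rules out these spurious points and lets the continuity-and-closure argument go through cleanly.
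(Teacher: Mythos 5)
The paper does not prove \cref{thm:homo}; it cites the result as {\cite[Lem.~3.2]{huang23mp}} and uses it as a black box, so there is no internal proof to compare against. Your argument is correct and is the natural reconstruction: the ($\Leftarrow$) direction follows purely from the rescaling bijection of \cref{lem:homo} together with the identity $\homo{f}(t, t\seq{x}) = t^d f(\seq{x})$, and the ($\Rightarrow$) direction correctly handles $\homo{\mathcal{K}}_{>0}$ via the same bijection (using $x_0^d > 0$) and then extends to the fiber $\{x_0 = 0\}$ by continuity of $\homo{f}$ together with density, which is precisely where the closed-at-infinity hypothesis $\mathrm{cl}(\homo{\mathcal{K}}_{>0}) = \homo{\mathcal{K}}$ enters and is the only place it is needed.
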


% \begin{theorem}[{\cite[Lem~3.2]{huang23mp}}]\label{thm:homo}
%     When $\mathcal K$ is closed at $\infty$, 
%     \begin{equation*}
%         f(\seq x)\ge 0 \text{ over } \mathcal K \iff \homo{f}((x_0, \seq{x}) ) \ge 0 \text{ over } \homo{\mathcal 
%     K}
%     \end{equation*}
%     % $f(\seq x)\ge 0$ over $\mathcal K$ if and only if $\homo{f}((x_0, \seq{x}) ) \ge 0$ over $\homo{\mathcal 
%     % K}$.  
% \end{theorem}

Now we present the homogenized version of \cref{thm:bounded}, which solves the problem raised at the end of the last section.

\begin{theorem}\label{thm:unbounded-poly}
Assume that $\init$, $\unsafe$, and $\domain$ are all closed at infinity.
Given $\lambda\in \Real$ and $\epsilon_e\in \Real_{>0}$,  
consider the following constraints with parameter $\epsilon$,
\begin{equation}
    \begin{aligned}\label{eq:unbound-poly}
        & - \homo{B}(x_0, \seq{x}) + \epsilon = \sigma_0^\init + \sum_{i=1}^{m_i+2} \sigma^\init_i \homo{g}^\init_i\\
        & \homo{B}(x_0, \seq{x}) - \epsilon_e x_0^{d} + \epsilon = \sigma_0^\unsafe + \sum_{i=1}^{m_u+2} \sigma^\unsafe_i \homo{g}^\unsafe_i \\
        &  \homo{H}(x_0, \seq{x}) + \epsilon =  \sigma_0^\domain + \sum_{i=1}^{m_x+2} \sigma^\domain_i \homo{g}^\domain_i\\
        & \sigma^\init_0, \dots, \sigma^\init_{m_i+1}, 
        \sigma^\unsafe_0, \dots, \sigma^\unsafe_{m_u+1},
        \sigma^\domain_0, \dots, \sigma^\domain_{m_x+1}
        \in \Sigma[x_0, \seq{x}],\\ &
        \sigma^\init_{m_i+2}, \sigma^\unsafe_{m_u+2},
        \sigma^\domain_{m_x+2} \in \Real[x_0, \seq{x}],
    \end{aligned}
\end{equation}
where $H(\seq x)\defeq \lambda B(\seq x) - \Lie_{\seq f} B(\seq x)$, $d$ is the degree of $\deg{B}(\seq{x})$, $\homo{g}_{m_i+1}^\init = \homo{g}_{m_u+1}^\unsafe = \homo{g}_{m_x+1}^\domain = x_0$, and $\homo{g}_{m_i+2}^\init = \homo{g}_{m_u+2}^\unsafe = \homo{g}_{m_x+2}^\domain = x_0^2 + \norm{\seq{x}}^2-1$.   
When $\epsilon=0$, \cref{eq:unbound-poly} gives a sound characterization of polynomial barrier certificates, i.e., any solution $B(\seq{x})\in \Real[\seq{x}]$ of degree~$d$ to the above constraints is a barrier certificate.
When $\epsilon>0$, \cref{eq:unbound-poly} gives a complete characterization of polynomial barrier certificates, i.e., any barrier certificate $B(\seq{x})\in \Real[\seq{x}]$ of degree $d$ satisfies the above constraints.
\end{theorem}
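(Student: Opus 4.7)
The plan is to handle soundness and completeness separately, routing both through the bijection of \cref{lem:homo} and the lifting principle of \cref{thm:homo}. The key observation is that the right-hand sides of \cref{eq:unbound-poly} are precisely Putinar-style expressions associated with ``spherical'' lifted sets such as
\[
  \homo\init \;=\; \{ (x_0,\seq x) : \homo g^\init_i(x_0,\seq x)\ge 0,\; x_0\ge 0,\; x_0^2+\norm{\seq x}^2=1\},
\]
and analogously for $\homo\unsafe$, $\homo\domain$. In \cref{eq:unbound-poly} the extra generator $x_0$ encodes the nonnegativity of $x_0$, whereas the last term $\sigma^\cdot_{m_\cdot+2}\,(x_0^2+\norm{\seq x}^2-1)$ is kept with an \emph{arbitrary} polynomial multiplier (not SOS), which is the standard way to encode the equality $x_0^2+\norm{\seq x}^2=1$ as a combination of the two inequalities $x_0^2+\norm{\seq x}^2-1\ge 0$ and $1-x_0^2-\norm{\seq x}^2\ge 0$.

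For soundness with $\epsilon=0$, a direct variant of \cref{prop:qm} would show that each right-hand side is nonnegative on the corresponding lifted set, yielding $-\homo B\ge 0$ on $\homo\init$, $\homo B-\epsilon_e x_0^d\ge 0$ on $\homo\unsafe$, and $\homo H\ge 0$ on $\homo\domain$. To pull these back to $\init$, $\unsafe$, $\domain$, I would apply \cref{lem:homo}: any $\seq x\in\init$ corresponds to a point in $\homo\init_{>0}\subseteq \homo\init$ with $x_0=1/\sqrt{1+\norm{\seq x}^2}>0$, and plugging that point into $\homo B$ gives $x_0^d B(\seq x)$; dividing by the strictly positive $x_0^d$ recovers \cref{eq:bc-init}. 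The same argument, applied to $\homo B-\epsilon_e x_0^d$ and $\homo H$, yields \cref{eq:bc-unsafe,eq:bc-inv}, showing that $B$ is a valid barrier certificate.

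For completeness with $\epsilon>0$, I would invert the argument. From a barrier certificate $B$ of degree $d$, the three barrier conditions give nonnegativity of $-B$, $B-\epsilon_e$, and $H=\lambda B-\Lie_{\seq f} B$ on $\init$, $\unsafe$, $\domain$ respectively. Since these three basic semialgebraic sets are assumed closed at infinity, \cref{thm:homo} transports nonnegativity to $\homo\init$, $\homo\unsafe$, $\homo\domain$ (using that the homogenizations in question are exactly $\homo B$, $\homo B-\epsilon_e x_0^d$, and $\homo H$). Adding the strictly positive slack $\epsilon$ upgrades nonnegativity to strict positivity. Because every lifted set lies inside the unit sphere, the associated quadratic module enriched with the equality generator is Archimedean: for any $N\ge 1$,
\[
   N-(x_0^2+\norm{\seq x}^2) \;=\; (N-1)\;+\;(-1)\cdot(x_0^2+\norm{\seq x}^2-1)
\]
belongs to the module. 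An application of Putinar's Positivstellensatz (\cref{thm:putinar}) then produces the SOS certificates on the right-hand side of \cref{eq:unbound-poly}, where the multiplier of the equality generator is allowed to be any polynomial.

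The step I expect to require the most care is not any single computation but the careful bookkeeping of degrees and the justification that the closed-at-infinity hypothesis is actually being used correctly by \cref{thm:homo}: the homogenization $\homo B$ must be taken at degree exactly $d=\deg B$ so that $\widetilde{B-\epsilon_e}=\homo B-\epsilon_e x_0^d$, whereas $\homo H$ must be taken at its own (possibly larger) degree $\deg H$. A secondary technical point is confirming that Putinar's result, usually stated for pure quadratic modules with inequality generators, extends smoothly when an equality generator is added with an arbitrary polynomial multiplier; this is precisely what forces $\sigma^\init_{m_i+2}$, $\sigma^\unsafe_{m_u+2}$, $\sigma^\domain_{m_x+2}$ to live in $\Real[x_0,\seq x]$ rather than $\Sigma[x_0,\seq x]$ in the statement.
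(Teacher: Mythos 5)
Your proposal is correct and follows essentially the same route as the paper's proof: transform each barrier condition to the lifted spherical set via \cref{thm:homo} (homogenizing $B-\epsilon_e$ and $H$ as a whole), observe that the sphere equality makes the quadratic module Archimedean, and then invoke \cref{prop:qm} for soundness and \cref{thm:putinar} for completeness. Your additional remarks on degree bookkeeping and on the equality generator carrying an arbitrary polynomial multiplier are exactly the points the paper leaves implicit.
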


\begin{proof}
    We prove the first constraint corresponding to the initial set $\init$, the other two constraints are similar. 
    By employing homogenization and \cref{thm:homo}, the original condition \cref{eq:bc-init} can be transformed into $- \homo{B}(x_0, \seq{x})\ge 0$ over $\homo{\init}$. 
    Since the descriptive polynomials in $\homo{\init}$ contain $\|\seq{x}\|^2+x^2_0 = 1$, $\homo{\init}$ is a closed basic semialgebraic set and its corresponding quadratic module is Archimedean.
    Thus, we can apply \cref{prop:qm} and \cref{thm:putinar} to obtain the soundness and completeness results, respectively.
    Note for the other two constraints, we need to homogenize the polynomial $B(\seq{x})-\epsilon_e$ and $B(\seq x) - \Lie_{\seq f} B(\seq x)$ as a whole.
\end{proof}

\section{Homogenized Systems and Semialgebraic Barrier Certificates}
\label{sec:semi}

In this section, we take a different perspective of the technique in the last section.
The motivation comes from the observation that the homogenization procedure can be viewed as mapping the original system in $\Real^n$ into a new system in $\Real^{n+1}$.
Consequently, the constraints in \cref{eq:unbound-poly} can be conceived as barrier certificate conditions for the new system.
Employing this idea, we introduce the definition of homogenized systems as follows. 
To avoid confusion, we will use $(y_0, \seq{y})\in \Real^{n+1}$ to denote the state variables of the homogenized systems.

\begin{definition}[Homogenized System]
Given a system~\cref{eq:system}, the homogenized system is an associated system in $\Real^{n+1}$.
For each state $\seq{x}\in \Real^n$ of the original system, the corresponding state $(y_0,\seq{y})\in \Real^{n+1}$ of the homogenized system is given by
\begin{equation} \label{eq:system-homo}
    (y_0, y_1, \dots, y_n) = (\frac{1}{\sqrt{1+\|\seq{x}\|^2}}, \frac{x_1}{\sqrt{1+\|\seq{x}\|^2}}, \ldots, \frac{x_n}{\sqrt{1+\|\seq{x}\|^2}}).
\end{equation}
\end{definition}

The dynamics of the homogenized systems can be obtained by taking derivative in the right-hand-side of \cref{eq:system-homo}. 
Hence, the safety verification problem of the original system~\cref{eq:system} with sets $\domain$, $\init$, and $\unsafe$ can be translated into an equivalent problem for the homogenized system~\cref{eq:system-homo} with sets $\homo{\domain}$, $\homo{\init}$, and $\homo{\unsafe}$.
Furthermore, we show that a barrier certificate of the original system can be computed from a barrier certificate of the homogenized system. 

\begin{theorem}\label{thm:equiv}
    $B(y_0, \seq{y})\in \mathcal C^1(\Real^{n+1})$ is a barrier certificate of the homogenized system if and only if $B(\frac{1}{\sqrt{\|\seq{x}\|^2+1}}, \frac{\seq{x}}{\sqrt{\|\seq{x}\|^2+1}})$ is a barrier certificate of the original system.
\end{theorem}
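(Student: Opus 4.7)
The plan is to exploit the bijective correspondence $\phi(\seq{x}) = \bigl(\tfrac{1}{\sqrt{1+\|\seq{x}\|^2}}, \tfrac{\seq{x}}{\sqrt{1+\|\seq{x}\|^2}}\bigr)$ between $\Real^n$ and the open upper hemisphere $\{(y_0,\seq{y}) : y_0^2+\|\seq{y}\|^2=1,\; y_0>0\}$, which is exactly the mapping from \cref{lem:homo} and the defining equation \eqref{eq:system-homo} of the homogenized system. By construction, $\phi$ is a $\mathcal{C}^1$-diffeomorphism onto its image and transports trajectories of $\dot{\seq{x}}=\seq{f}(\seq{x})$ to trajectories of the homogenized dynamics $\dot{(y_0,\seq{y})}=\tilde{\seq{f}}(y_0,\seq{y})$. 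Given $B\in\mathcal{C}^1(\Real^{n+1})$, the composite $\tilde{B}(\seq{x})\defeq B(\phi(\seq{x}))$ is a $\mathcal{C}^1(\Real^n)$ function with $\tilde{B}(\seq{x})=B(y_0,\seq{y})$ whenever $(y_0,\seq{y})=\phi(\seq{x})$, so each barrier condition on one side can be pulled back or pushed forward along $\phi$ to the other side.

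First, I would verify the algebraic conditions \eqref{eq:bc-init} and \eqref{eq:bc-unsafe}. By \cref{lem:homo}, $\phi$ restricts to bijections $\init \leftrightarrow \homo{\init}_{>0}$ and $\unsafe \leftrightarrow \homo{\unsafe}_{>0}$. Hence $\tilde{B}(\seq{x})\le 0$ over $\init$ is equivalent to $B(y_0,\seq{y})\le 0$ over $\homo{\init}_{>0}$, and similarly $\tilde{B}(\seq{x})\ge \epsilon_e$ over $\unsafe$ is equivalent to $B(y_0,\seq{y})\ge\epsilon_e$ over $\homo{\unsafe}_{>0}$. Since $B$ is continuous and the standing assumption is that $\init,\unsafe,\domain$ are closed at infinity, so that $\homo{\init}=cl(\homo{\init}_{>0})$ and $\homo{\unsafe}=cl(\homo{\unsafe}_{>0})$, each of these inequalities extends automatically from the open hemisphere to its closure, giving the equivalence with the conditions on $\homo{\init}$ and $\homo{\unsafe}$ that the homogenized barrier certificate requires.

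The main obstacle is the Lie-derivative condition \eqref{eq:bc-inv}, since this is where the specific choice of $\tilde{\seq{f}}$ enters. Here the key identity to establish is
\begin{equation*}
    \Lie_{\seq{f}}\tilde{B}(\seq{x}) \;=\; \Lie_{\tilde{\seq{f}}} B(\phi(\seq{x})) \qquad \text{for all } \seq{x}\in\Real^n,
\end{equation*}
which follows from the chain rule together with the trajectory-preserving property of $\phi$: differentiating $t\mapsto \tilde{B}(\traj_{\seq{x}_0}(t)) = B(\phi(\traj_{\seq{x}_0}(t)))$ at $t=0$ yields both sides. Consequently $\Lie_{\seq{f}}\tilde{B}-\lambda \tilde{B}\le 0$ over $\domain$ is equivalent to $\Lie_{\tilde{\seq{f}}} B-\lambda B\le 0$ over $\homo{\domain}_{>0}$, and the continuity/closedness-at-infinity argument again extends this to $\homo{\domain}$. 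Assembling the three equivalences then yields both directions of the theorem.
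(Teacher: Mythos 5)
Your argument is correct and takes essentially the same route as the paper: pull $B$ back along the map $\phi$ of \cref{lem:homo}, use the bijections between $\init$, $\unsafe$, $\domain$ and the open hemisphere sets $\homo{\init}_{>0}$, $\homo{\unsafe}_{>0}$, $\homo{\domain}_{>0}$ to transport conditions \eqref{eq:bc-init}--\eqref{eq:bc-unsafe}, and use the chain rule for the Lie-derivative condition \eqref{eq:bc-inv}. You are actually a bit more careful than the paper in the direction it dismisses as ``similar'': you correctly observe that passing from ``$B\circ\phi$ is a barrier certificate of the original system'' to ``$B$ is a barrier certificate of the homogenized system'' requires extending the inequalities from the open sets $\homo{\init}_{>0}$, $\homo{\unsafe}_{>0}$, $\homo{\domain}_{>0}$ to their closures, and that this is exactly where the standing closed-at-infinity assumption and the continuity of $B$ enter. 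One caveat worth flagging, which affects the paper's proof just as much as yours: for condition \eqref{eq:bc-inv} the closure step tacitly assumes that $\Lie_{\tilde{\seq{f}}}B - \lambda B$ extends continuously across $y_0 = 0$, but when $\deg \seq{f}\ge 2$ the homogenized vector field $\tilde{\seq{f}}$ obtained by differentiating \eqref{eq:system-homo} has a pole along $y_0 = 0$, so this extension is not automatic for arbitrary $\mathcal C^1$ functions $B$. The SOS formulation of \cref{thm:unbounded-semi} avoids this by clearing denominators and working with the polynomial $G$ (and its homogenization) rather than with $\Lie_{\tilde{\seq{f}}}B$ directly; \cref{thm:equiv} as stated is silent on it, and your proof inherits the same implicit assumption.
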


\begin{proof}
    Let $B(y_0, \seq{y})$ be a barrier certificate of the homogenized system. Denote $g(\seq{x})\defeq B(\frac{1}{\sqrt{\|\seq{x}\|^2+1}}, \frac{\seq{x}}{\sqrt{\|\seq{x}\|^2+1}})$, we show that $g(\seq{x})$ satisfies the conditions in \cref{thm:bc}.
    For $\seq{x}\in\init$, since $(y_0, \seq{y})\in \homo{\init}^b$ by \cref{lem:homo} and \cref{eq:system-homo},  we have $g(\seq{x})=B(y_0,\seq{y})\le 0$.
    Similarly, for $\seq{x}\in \unsafe$, we have $g(\seq{x})= B(y_0,\seq{y})\ge \epsilon_e$.
    Finally, since 
    \begin{equation*} 
        \begin{aligned}
            \Lie_{\seq{f}} g(\seq {x}) 
            &= \sum_{i=1}^n \frac{\partial g(\seq x)}{\partial x_i} f_i(\seq{x}) = \sum_{i=1}^n \left(\sum_{j=0}^n \frac{\partial B(y_0, \seq{y})}{\partial y_j} \frac{\partial y_j}{\partial x_i}\right) f_i(\seq{x})\\
            & = \sum_{j=0}^n \frac{\partial B(y_0, \seq{y})}{\partial y_j} \left(\sum_{i=1}^n \frac{\partial y_j}{\partial x_i} f_i(\seq{x})\right) = \Lie_{\seq{f}'} B(y_0,\seq{y}),
        \end{aligned}
    \end{equation*}
    where $\seq{f}'$ is the dynamic of the homogenized system. For any $\seq{x}\in \domain$ we have $\Lie_{\seq{f}} g(\seq{x})-\lambda g(\seq{x}) = \Lie_{\seq{f}'} B(y_0, \seq{y})-\lambda B(y_0,\seq{y}) \le 0$.
    The other direction is similar.
\end{proof}

According to Stone–Weierstrass theorem \cite{stone48}, a continuous function in a compact space in $\Real^{n+1}$ can be approximated by polynomials. 
% \revcomment{It would be helpful to show how the accuracy of these barrier certificate approximations changes w.r.t. the dimension of the system and degree of the approximation. }
This means that, if there exists $B(y_0, \seq{y})\in \mathcal C^1(\Real^{n+1})$ as a barrier certificate, one should be able to find a polynomial barrier certificate (of sufficient large degree) close to it. 
% we can usually find a polynomial barrier certificate close to it.
In fact, this is one of the reasons why we are primarily concerned with polynomial barrier certificates in the bounded case.
By \cref{thm:equiv}, if $B(y_0, \seq{y})$ is a polynomial of degree $d$, then we have
\begin{equation*}
    (\sqrt{\|\seq{x}\|^2+1})^d B(\frac{1}{\sqrt{\|\seq{x}\|^2+1}}, \frac{\seq{x}}{\sqrt{\|\seq{x}\|^2+1}}) =  B_1(\seq{x}) + \sqrt{\|\seq{x}\|^2+1} \cdot B_2(\seq{x})
\end{equation*}
for some polynomials $B_1(\seq{x}),B_2(\seq{x})\in \Real[\seq{x}]$.
From this expression, we can see that \cref{thm:unbounded-poly} is a special case when $B(y_0, \seq{y})$ itself is a homogeneous polynomial (i.e., all monomials are of the same degree) and $B_2(\seq{x})=0$.
% \revcomment{does not seem that the semi-algebraic barrier certificate can easily express any semi-algebraic set. If that is a subclass, then the paper should find a different name.}

\begin{definition}
    We say a barrier certificate $B(\seq{x})$ is \emph{semialgebraic}
    \footnote{A function $f(\seq{x})$ is called semialgebraic if its graph $\{(\seq{x},f(\seq{x}))\mid \seq{x}\in \Real^n\}$ is a semialgebraic set. Here, semialgebraic barrier certificates can only express a certain subclass of semialgebraic functions.}
    if it can be expressed as $B(\seq{x})=B_1(\seq{x}) + \sqrt{\|\seq{x}\|^2+1} \cdot B_2(\seq{x})$ for some $B_1(\seq{x}),B_2(\seq{x})\in \Real[\seq{x}]$.
\end{definition}

The synthesis of semialgebraic barrier certificates is not straightforward, due to the existence of non-polynomial component $\sqrt{\|\seq{x}\|^2+1}$. 
To address this problem, we employ the technique in \cite{lasserre2012positivity} to encode these non-polynomial expressions into polynomials with extra variables.
To be concrete, we introduce two variables $u$ and $v$, which stand for $\sqrt{\|\seq{x}\|^2+1}$ and $\frac{1}{\sqrt{\|\seq{x}\|^2+1}}$, respectively.
Then, by \cref{thm:bc}, the conditions for a semialgebraic barrier certificate can be written as
\begin{equation}\label{eq:bc-semi}
\begin{aligned}
    B_1(\seq x)+u B_2(\seq x)&\le 0, &&\text{ for } \seq x\in \init, u^2=\|\seq{x}\|^2+1, u\ge 0, \\
    B_1(\seq x)+u B_2(\seq x)&\ge \epsilon_e, &&\text{ for } \seq x\in \unsafe, u^2=\|\seq{x}\|^2+1, u\ge 0,\\
    G(\seq{x},u,v)&\ge 0, &&\text{ for } \seq x\in \domain, u^2=\|\seq{x}\|^2+1, u\ge 0, uv=1 ,
\end{aligned}    
\end{equation}
where $G(\seq{x},u,v)\in \Real[\seq{x},u,v]$ is defined by 
\begin{equation}\label{eq:G}
\begin{aligned}
&\lambda \left(B_1(\seq{x})+\sqrt{\|\seq{x}\|^2+1} \cdot B_2(\seq{x})\right) - \Lie_{\seq{f}}\left(B_1(\seq{x})+\sqrt{\|\seq{x}\|^2+1} \cdot B_2(\seq{x})\right) \\
    % = & \lambda \left(B_1(\seq{x})+ u \cdot B_2(\seq{x})\right) - \Lie_{\seq{f}} B_1(\seq{x}) - \sum_{i=1}^n f_i(\seq{x}) \cdot \left( \frac{x_i}{\sqrt{\|\seq{x}\|^2+1}}B_2(\seq{x})+ \sqrt{\|\seq{x}\|^2+1}\frac{\partial B_2(\seq{x})}{\partial x_i} \right)\\
    = ~& \lambda \left(B_1(\seq{x})+ u \cdot B_2(\seq{x})\right) - \Lie_{\seq{f}} B_1(\seq{x}) - u\cdot  \Lie_{\seq{f}} B_2(\seq{x}) -  v B_2(\seq{x}) \sum_{i=1}^n x_i f_i(\seq{x}) \\
    \defeq &G(\seq{x},u,v).
\end{aligned}
\end{equation}

Similar to \cref{thm:bounded} and \cref{thm:unbounded-poly}, we have the following characterization for semialgebraic barrier certificates.
Without loss of generality, we assume that $B_1(\seq{x})$ and $B_2(\seq{x})$ are both of degree $d$.

\begin{theorem}\label{thm:unbounded-semi}
Assume that $\init$, $\unsafe$, and $\domain$ are all closed at infinity.
Given $\lambda\in \Real$ and $\epsilon_e\in \Real_{>0}$,  
consider the following constraints with parameter $\epsilon$,
\begin{equation}
    \begin{aligned}\label{eq:unbound-semi}
        & B(\seq{x}, u) = B_1(\seq{x}) + u \cdot B_2(\seq{x})\\
        & - \homo{B}(x_0, \seq{x}, u) + \epsilon = \sigma_0^\init + \sum_{i=1}^{m_i+4} \sigma^\init_i \homo{g}^\init_i \\
        & \homo{B}(x_0, \seq{x}, u) - \epsilon_e x_0^{d+1} + \epsilon = \sigma_0^\unsafe + \sum_{i=1}^{m_u+4} \sigma^\unsafe_i \homo{g}^\unsafe_i \\
        &  \homo{G}(x_0, \seq{x}, u, v) + \epsilon =  \sigma_0^\domain + \sum_{i=1}^{m_x+5} \sigma^\domain_i \homo{g}^\domain_i\\
        & \sigma^\init_0, \dots, \sigma^\init_{m_i+2}, 
        \sigma^\unsafe_0, \dots, \sigma^\unsafe_{m_u+2}
        \in \Sigma[x_0, \seq{x}, u], \\
        &\sigma^\domain_0, \dots, \sigma^\domain_{m_x+2}
        \in \Sigma[x_0, \seq{x}, u, v],\\ 
        &\sigma^\init_{m_i+3}, \sigma^\init_{m_i+4}, \sigma^\unsafe_{m_u+3}, \sigma^\unsafe_{m_u+4}\in \Real[x_0, \seq{x}, u]\\
        &\sigma^\domain_{m_x+3}, \sigma^\domain_{m_x+4}, \sigma^\domain_{m_x+5} \in \Real[x_0, \seq{x},u,v],
    \end{aligned}
\end{equation}
where $G(\seq x,u,v)$ is as defined in \cref{eq:G}, $\homo{g}_{m_i+1}^\init = \homo{g}_{m_u+1}^\unsafe = \homo{g}_{m_x+1}^\domain = x_0$, 
$\homo{g}_{m_i+2}^\init = \homo{g}_{m_u+2}^\unsafe = \homo{g}_{m_x+2}^\domain = u$,
$\homo{g}_{m_i+3}^\init = \homo{g}_{m_u+3}^\unsafe = \homo{g}_{m_x+3}^\domain = u^2 - x_0^2 -\norm{\seq{x}}^2$,
$\homo{g}_{m_i+4}^\init = \homo{g}_{m_u+4}^\unsafe= x_0^2 + \norm{\seq{x}}^2 + u^2 - 1$,
$\homo{g}_{m_x+4}^\domain = uv-x_0^2$, and
$\homo{g}_{m_x+5}^\domain = x_0^2 + \norm{\seq{x}}^2 + u^2 + v^2 - 1$.   
When $\epsilon=0$, \cref{eq:unbound-semi} gives a sound characterization for semialgebraic barrier certificates, i.e., any pair of solutions $B_1(\seq{x}), B_2(\seq{x})\in \Real[\seq{x}]$ to the above constraints makes $B(\seq{x})$ a barrier certificate.
When $\epsilon>0$, \cref{eq:unbound-semi} gives a complete characterization for semialgebraic barrier certificates, i.e., any semialgebraic barrier certificate with $B_1(\seq{x}),B_2(\seq{x})\in \Real[\seq{x}]$ of degree $d$ satisfies the above constraints.
\end{theorem}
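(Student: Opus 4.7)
The plan is to mirror the proof of \cref{thm:unbounded-poly}, proceeding in three stages. First, I recast the semialgebraic-barrier-certificate conditions from \cref{eq:bc-semi} as purely polynomial inequalities on augmented basic semialgebraic sets living in the extended variables $(\seq{x},u)$ for the $\init$ and $\unsafe$ constraints, and in $(\seq{x},u,v)$ for the $\domain$ constraint. I then homogenize with a fresh variable $x_0$ and invoke \cref{thm:homo} to move to compact sets on the unit sphere, after which \cref{prop:qm} yields soundness ($\epsilon=0$) and \cref{thm:putinar} yields completeness ($\epsilon>0$).

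To illustrate, for the initial condition one rewrites the requirement ``$B_1(\seq{x})+\sqrt{\|\seq{x}\|^2+1}\cdot B_2(\seq{x})\le 0$ on $\init$'' as $-B(\seq{x},u)\ge 0$ on the set
\begin{equation*}
\init'=\{(\seq{x},u)\mid g_1^\init(\seq{x})\ge 0,\dots,g_{m_i}^\init(\seq{x})\ge 0,\ u\ge 0,\ u^2-\|\seq{x}\|^2-1=0\}.
\end{equation*}
Because $\init$ is closed at infinity and $u$ is pinned down by a polynomial equation with a unique nonnegative solution in $\seq{x}$, the augmented set $\init'$ inherits the closed-at-infinity property. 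Applying \cref{thm:homo} transforms $-B(\seq{x},u)\ge 0$ on $\init'$ into $-\homo{B}(x_0,\seq{x},u)\ge 0$ on the compact set $\homo{\init'}\subseteq \Real^{n+2}$, whose associated quadratic module is Archimedean. The two equalities $u^2-x_0^2-\|\seq{x}\|^2=0$ and $x_0^2+\|\seq{x}\|^2+u^2-1=0$ enter the Putinar representation as $\pm$ pairs of SOS multipliers, each pair collapsing into a single arbitrary multiplier in $\Real[x_0,\seq{x},u]$, which matches the declaration of $\sigma^\init_{m_i+3}$ and $\sigma^\init_{m_i+4}$ in \cref{eq:unbound-semi}. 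The exponent $d+1$ in $\epsilon_e x_0^{d+1}$ reflects that $B(\seq{x},u)$ has total degree $d+1$ in $(\seq{x},u)$.

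The unsafe constraint is completely analogous. The domain constraint follows the same recipe after enlarging to $(x_0,\seq{x},u,v)$ and incorporating the homogenized relation $uv=x_0^2$ together with the unit-sphere equality in $\Real^{n+3}$; the polynomial $G(\seq{x},u,v)$ from \cref{eq:G} takes the place of $\lambda B - \Lie_{\seq{f}} B$ in \cref{thm:unbounded-poly}, and the three equality generators yield the three arbitrary real-coefficient multipliers $\sigma^\domain_{m_x+3},\sigma^\domain_{m_x+4},\sigma^\domain_{m_x+5}\in\Real[x_0,\seq{x},u,v]$. I expect the main obstacle to be careful bookkeeping rather than a genuinely new idea: one must verify that the augmented sets $\init',\unsafe',\domain'$ remain closed at infinity after eliminating the radical through $u$ (and its reciprocal through $v$), confirm that the chosen homogenization degrees align across the three constraints so that each $\homo{g}_i$ is a well-defined homogeneous polynomial of the expected degree, and check that the multiplier-type declarations in \cref{eq:unbound-semi} correctly encode the $\pm$-pair collapse for every equality constraint. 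Once these technical points are settled, the soundness and completeness claims reduce to direct applications of \cref{prop:qm} and \cref{thm:putinar} on the Archimedean quadratic modules on the sphere, exactly as in the polynomial case.
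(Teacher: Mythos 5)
Your proposal is correct and follows the same route the paper takes: the paper's own proof is simply "By applying \cref{thm:homo} to \cref{eq:bc-semi}. Similar to the proof of \cref{thm:unbounded-poly}," and your expansion — lifting to augmented sets in $(\seq{x},u)$ and $(\seq{x},u,v)$, homogenizing with $x_0$, invoking \cref{thm:homo} to land on a compact slice of the sphere, and then applying \cref{prop:qm} and \cref{thm:putinar} for soundness and completeness, with equality constraints encoded via unconstrained multipliers — is exactly the argument the paper intends. The one point you flag but do not settle (that the augmented sets remain closed at infinity) is also left implicit in the paper, so your proposal is no less rigorous than the original.
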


\begin{proof}
    By applying \cref{thm:homo} to \cref{eq:bc-semi}. Similar to the proof of \cref{thm:unbounded-poly}.
\end{proof}

\section{Experiments}
\label{sec:exp}

\paragraph{Implementation.}
We implemented the barrier certificate synthesis procedures in \tool{Julia} programming language, interfaced with \textsc{TSSOS} \cite{tssos} for formulating SOS relaxations and \tool{Mosek} solver \cite{mosek} for solving the underlying SDP.
All experiments were performed on a Mac lap-top with Apple M2 chip and 8GB memory.
The code and benchmarks are publicly available online\footnote{\url{https://github.com/EcstasyH/BCunbounded}}.
In the following, we use the corresponding theorems to refer to different approaches/characterizations.

\paragraph{Experiment Settings.}
The goal of our experiments was to compare the differences between employing characterizations Thm.~\ref{thm:bounded}, Thm.~\ref{thm:unbounded-poly}, and Thm.~\ref{thm:unbounded-semi} to synthesize exponential-type barrier certificates over unbounded domains.
To this end, we collected a set of dynamical systems of dimension 2 and 3 from the literature.
For each benchmark system, we designed two problem instances. 
In the first instance, we only let the domain~$\domain=\Real^n$ be unbounded, 
while in the second instance, we further let the initial set~$\init$ and/or the unsafe region~$\unsafe$ be unbounded (not necessarily containing the original bounded counterparts). 
% For each problem instance, we searched for barrier certificates from degree~1 and reported the minimum degree such that either \pref{opt:gbc-suf-sos} or \pref{opt:gbc-nec-sos} is solvable.

In practical computation, we set $\lambda=-1$, $\epsilon_e = 10^{-5}$ in the definition of barrier certificates and $\epsilon=0$ in the sum-of-squares characterizations.
As discussed after Thm.~\ref{thm:bounded}, the $\epsilon=0$ case can be viewed as both sound and complete in most practical situations.
We manually verified that the sets $\init$, $\unsafe$, and $\domain$ are closed at infinity.

For Thm.~\ref{thm:bounded} and Thm.~\ref{thm:unbounded-poly}, we searched for polynomial barrier certificates $B(\seq{x})$ up to degree $6$. 
For Thm.~\ref{thm:unbounded-semi}, due to the $\sqrt{\|\seq{x}\|^2+1}$ term, we searched for semialgebraic barrier certificates with $B_1(\seq{x})$ and $B_2(\seq{x})$ up to degree 4.
When the target degree $d$ is fixed, by restricting the highest degree of involved polynomials to be the smallest even number larger than $d$, the sum-of-squares characterizations can be solved as SDPs \cite{blekherman2012sdp}. 
For each solution returned by SDP solver, we utilized \tool{Mathematica} to symbolically verify that the numerical solution $B(\seq x)$ satisfies the barrier certificate conditions.
The timeout for verifying each barrier certificate candidate was set to be 10 minutes. 
We report the total time for solving SDP constraints and verifying the results.
% The verification time is not presented but can be found through our link.\revcomment{the paper report only the synthesis time with SOS but omits the total time required to validate the barrier certificate}

\begin{table*}[t]
\captionsetup{font={small}}
\caption{Experimental results for synthesizing exponential type barrier certificates.}
\label{tab:exp}
\begin{center}
% name dim degree time verified
    \begin{tabular}{c c c ccr c ccr c ccr} 
    \toprule ~ & ~ & ~ &\multicolumn{3}{c}{Thm.~\ref{thm:bounded} \cite{kong13cav}} & ~ & \multicolumn{3}{c}{Our Thm.~\ref{thm:unbounded-poly}}& ~ & \multicolumn{3}{c}{Our Thm.~\ref{thm:unbounded-semi}}\\ 
    \cmidrule{4-6} \cmidrule{8-10} \cmidrule{12-14} 
    system & dim & unbounded & $\deg$ & succ & time(s) & ~ & $\deg$ & succ & time(s) & ~ & $\deg$ & succ & time(s)  \\
    \midrule
    \textsf{vector}\cite{sogokon18fm} & 2 &$\domain$ & 4 & \mycheckmark & 0.58  & ~  & 3 & \mycheckmark & 0.03 & ~  & \textbf{2} & \mycheckmark & 0.83 \\
    ~ & ~ & $\init,\unsafe,\domain$ & $>6$ & \mycrossmark & 0.39  & ~  &  4 & \mycheckmark & 0.16 & ~  & \textbf{2} & \mycheckmark & 0.72 \\ 
    \textsf{barrier}\cite{prajna04hscc} & 2 & $\domain$ & $>6$ & \mycrossmark & 2.40 & ~ & $>6$ & \mycrossmark & 2.73 & ~ & $>$4 & \mycrossmark & 22.98  \\
     ~ & ~ & $\init,\unsafe,\domain$ & $>6$ & \mycrossmark & 0.78 & ~ & 3 & \mycheckmark & 0.04 & ~ & \textbf{2} & \mycheckmark & 3.12  \\
    \textsf{lie-der}\cite{liu2011emsoft} & 2 & $\domain$ & 3 & \mycheckmark & 0.19 & ~ & 3 & \mycheckmark & 0.14 & ~ & \textbf{1} & \mycheckmark & 0.75  \\
    ~ & ~ & $\init,\unsafe,\domain$ & \textbf{3} & \mycheckmark & 0.14 & ~ & \textbf{3} & \mycheckmark & 0.19 & ~ & \textbf{3} & \mycheckmark & 5.04 \\
    \textsf{arch1}\cite{sogokon2016arch} & 2 & $\domain$ & \textbf{4} & \mycheckmark & 0.40 & ~ & \textbf{4} & \mycheckmark & 0.54 & ~ & $>4$ & \mycrossmark & 117.61  \\
    ~ & ~ & $\init,\unsafe,\domain$ & \textbf{1} & \mycheckmark & 0.09 & ~ & \textbf{1} & \mycheckmark & 0.04 & ~ & 2 & \mycheckmark & 20.92  \\
    \textsf{arch2}\cite{sogokon2016arch} & 2 & $\domain$ & \textbf{3} & \mycheckmark & 0.20 & ~ & \textbf{3} & \mycheckmark & 0.21 & ~ & \textbf{3} & \mycheckmark & 5.53  \\
    ~ & ~ & $\init,\unsafe,\domain$ & 3 & \mycheckmark & 0.18 & ~ & 3 & \mycheckmark & 0.18 & ~ & \textbf{2} & \mycheckmark & 1.59  \\
    \textsf{arch3}\cite{sogokon2016arch} & 2 & $\domain$ & \textbf{2} & \mycheckmark & 0.12 & ~ & \textbf{2} & \mycheckmark & 0.13 & ~ & \textbf{2} & \mycheckmark & 3.45  \\
    ~ & ~ & $\init,\unsafe,\domain$ & $>6$ & \mycrossmark & 0.84 & ~ & $>6$ & \mycrossmark & 1.30 & ~ & \textbf{1} & \mycheckmark & 0.34  \\
    \textsf{arch4}\cite{sogokon2016arch} & 2 & $\domain$ & $>6$ & \mycrossmark & 0.11 & ~ & 5 & \mycheckmark & 0.74 & ~ & \textbf{3} & \mycheckmark & 5.69  \\
    ~ & ~ & $\unsafe,\domain$ & $>6$ & \mycrossmark & 1.02 & ~ & \textbf{6} & \mycheckmark & 1.21 & ~ & $>4$ & \mycrossmark & 7.74 \\
    \textsf{nagumo}\cite{sassi2014cdc} & 2 & $\domain$ & \textbf{2} & \mycheckmark & 0.13 & ~ & \textbf{2} & \mycheckmark & 0.14 & ~ & \textbf{2} & \mycheckmark & 3.50  \\
    ~ & ~ & $\unsafe,\domain$ & $>6$ & \mycrossmark & 1.02 & ~ & \textbf{3} & \mycheckmark & 0.17 & ~ & $>4$ & \mycrossmark & 23.73  \\
    \textsf{lorenz}\cite{djaballah17auto} & 3 & $\domain$ & 6 & \myquestionmark & TO & ~ & \textbf{4} & \mycheckmark & 72.13 & ~ & 2 & \myquestionmark & TO  \\
    ~ & ~ & $\unsafe,\domain$ & \textbf{5} & \mycheckmark & 248.88 & ~ & 6 & \myquestionmark & TO & ~ & 2 & \myquestionmark & TO  \\
    \textsf{lotka}\cite{goubault2014acc} & 3 & $\domain$ & $>6$ & \mycrossmark & 88.97 & ~ & $>6$ & \mycrossmark & 27.8 & ~ & 3 & \myquestionmark & TO  \\
    ~ & ~ & $\unsafe,\domain$ & $>6$ & \mycrossmark & 0.27 & ~ & $>6$ & \mycrossmark & 438.54 & ~ & 3 & \myquestionmark & TO  \\
    \bottomrule
    \end{tabular}
    \end{center}
    \scriptsize{
    \textbf{dim}: system dimension; \textbf{unbounded}: the unbounded region(s); \textbf{deg}: degree of polynomial barrier certificates or polynomial components in semialgebraic barrier certificates; \textbf{succ}: whether our algorithm succeeds in finding a valid barrier certificate.
    \mycheckmark means valid solution, \mycrossmark means no solution or invalid solution (within the search range),  and \myquestionmark means unverified; 
    % \revise{
    TO: verification takes more than 10 minutes in \tool{Mathematica};
    \textbf{time}: total time for SDP solving and verification.
    % } 
    }
\end{table*}

\paragraph{Empirical Observations.}
\cref{tab:exp} reports the experimental results, and \cref{fig:plot} portraits selected examples. We mainly compare the results from two aspects. 

\textbf{Expressiveness}:
For problems with unbounded domains, both our complete characterizations Thm.~\ref{thm:unbounded-poly} and Thm.~\ref{thm:unbounded-semi} are more expressive than the incomplete characterization Thm.~\ref{thm:bounded}, as they succeeds in synthesizing barrier certificates in more problem instances.
The two complete characterizations offer distinct advantages:
Thm.~\ref{thm:unbounded-poly} exhibits broader applicability, demonstrably successful for problem instances like \textsf{arch4-2} and \textsf{nagumo-2}. 
In contrast, Thm.~\ref{thm:unbounded-semi} excels at synthesizing lower-degree barrier certificates, as exemplified by \textsf{vector-1,2} and \textsf{barrier-1,2} problem instances.
The experimental results also demonstrate that, while Thm.~\ref{thm:unbounded-semi} theoretically subsumes Thm.~\ref{thm:unbounded-poly}, its characterization presents significantly greater complexity and hinders its ability to identify solutions, due to inherent numerical issues in SDP solvers.

\textbf{Efficiency}:
For most benchmarks, the time overhead for employing \cref{thm:unbounded-poly} is comparable to \cref{thm:bounded}, while \cref{thm:unbounded-semi} is evidently slower than the other two.
% This should be attributed to the fact that, 
%while \cref{thm:unbounded-poly} and \cref{thm:unbounded-semi} can identify valid barrier certificates of lower degrees, 
This should be attributed to the introduction of fresh variables in SOS characterizations in the complete characterizations (one for \cref{thm:unbounded-poly} and two for \cref{thm:unbounded-semi}).
%will introduce fresh variables in SOS characterizations (one for \cref{thm:unbounded-poly} and two for \cref{thm:unbounded-semi}) and 
Hence, the computation cost of both SDP solving and posterior verification increases,   
mildly for Thm.~\ref{thm:unbounded-poly} (e.g., \textsf{lie-der-2} and \textsf{arch1-1}) but severely for Thm.~\ref{thm:unbounded-semi} (e.g., \textsf{barrier-1} and \textsf{arch1-1}). 
We also want to emphasize that, for 3-dimensional systems with higher-degree templates, posterior verification time increases significantly, meaning that we can not verify the validity of the barrier certificate candidates within a reasonable amount of time.

%For smaller systems and lower-degree templates, solving the constraints remains efficient with comparable times. 

% % It is evident that employing the complete characterizations will increase the computational cost of SDP solving for almost all benchmarks. However, the cost increase varies: mild for Thm.~\ref{thm:unbounded-poly} but severe for Thm.~\ref{thm:unbounded-semi} (e.g., \textsf{arch1}). 
% This aligns with our initial expectations, as the homogenization formulation introduces a fresh variable $x_0$ and Thm.~\ref{thm:unbounded-semi} further adds two variables $u$, $v$ to encode non-polynomial terms.
% For smaller systems and lower-degree templates, solving the constraints remains efficient with comparable times. 
% Nevertheless, currently, the efficiency loss from these factors is not a major bottleneck: 
% For larger systems and higher-degree templates, posterior verification time increases significantly, meaning that we can not decide whether the barrier certificate is valid.

\paragraph{Summary.}
For practical applications, we recommend employing Thm.~\ref{thm:unbounded-poly} to synthesize polynomial barrier certificates for unbounded problems. 
This approach achieves a high level of expressiveness while maintaining efficiency comparable to Thm.~\ref{thm:bounded}.
Moreover, we believe that the performance of Thm.~\ref{thm:unbounded-semi} can be improved by exploiting algebraic structures of the constraints. 
For example, the variables $u$, $v$ only occur linearly or quadratically in constraints, which can be utilized in restrict the templates of unknown sum-of-squares polynomials.

\begin{remark}
In our experiments, we did \emph{not} consider different parameter settings (such as the selection of $\lambda$ discussed in \cite{kong13cav}) and constraint formulations (such as techniques for taming numerical errors discussed in \cite{roux18fmsd}), which may impact the synthesized barrier certificates but are not the focus of the current paper.     
\end{remark}

\begin{figure*}[t]
    \captionsetup{font={small}}
    \centering
    \begin{subfigure}[b]{0.24\textwidth}
        \centering
        \includegraphics[width=\textwidth]{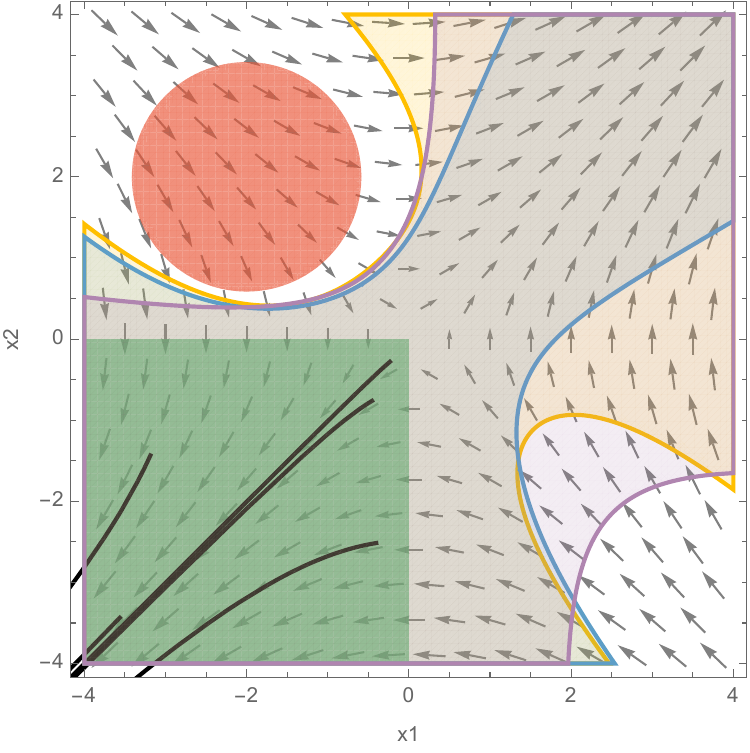}
        \caption{$\textsf{vector-1}$}
    \end{subfigure}
    \hfill
    \begin{subfigure}[b]{0.24\textwidth}
        \centering
        \includegraphics[width=\textwidth]{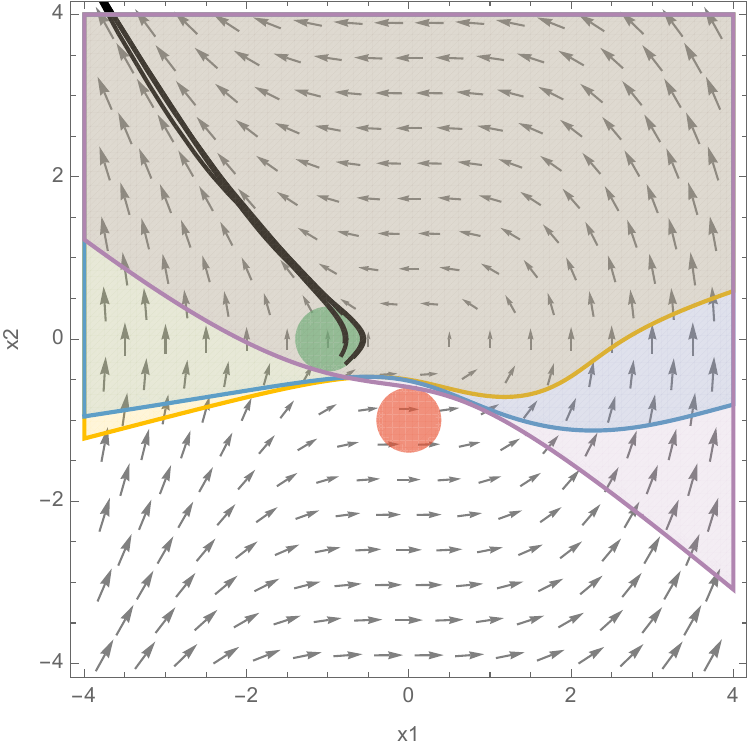}
        \caption{$\textsf{lie-der-1}$}
    \end{subfigure}
    \hfill
    \begin{subfigure}[b]{0.24\textwidth}
        \centering
        \includegraphics[width=\textwidth]{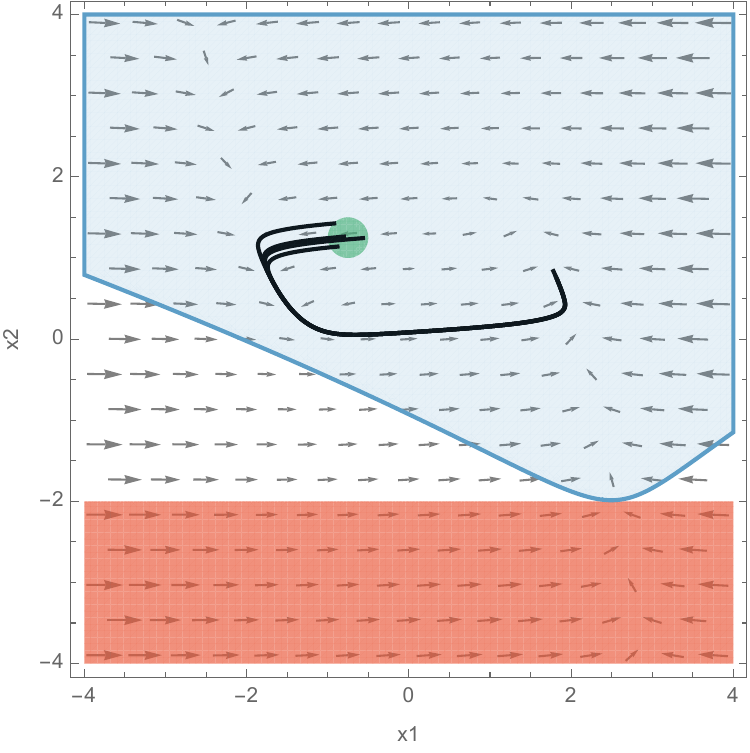}
        \caption{$\textsf{nagumo-2}$}
    \end{subfigure}
    \hfill
    \begin{subfigure}[b]{0.24\textwidth}
        \centering
        \includegraphics[width=\textwidth]{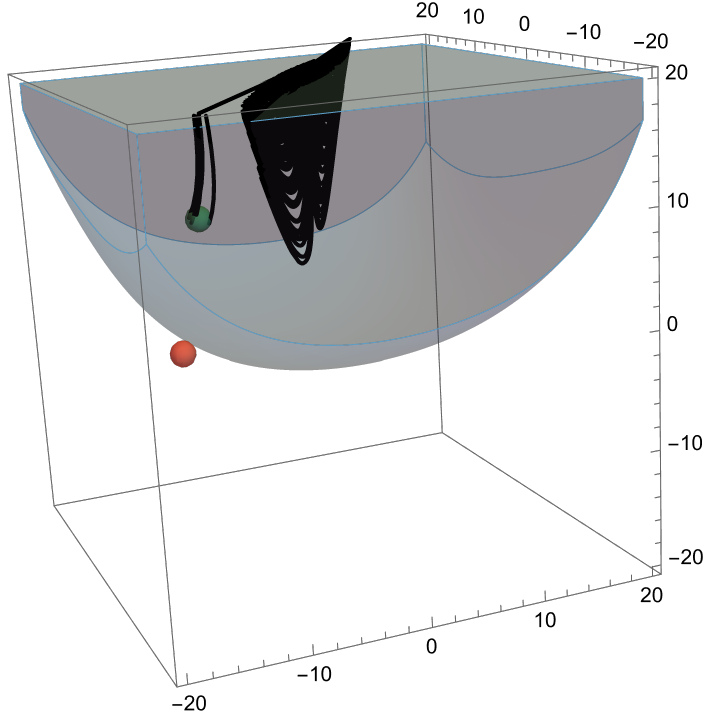}
        \caption{$\textsf{lorenz-1}$}
    \end{subfigure}
    \scriptsize{
    green region: initial set~$\init$; 
    red region: unsafe region~$\unsafe$;
    black solid curves: sampled trajectories $\traj_{\seq x_0}$;\\
    light blue, yellow, and purple region: sub-level set of the synthesized barrier certificate by using Thm.~\ref{thm:bounded}, Thm.~\ref{thm:unbounded-poly}, and Thm.~\ref{thm:unbounded-semi}.
    }
    \caption{Portraits of four selected examples.}
    \label{fig:plot}
\end{figure*}

\section{Conclusion}
\label{sec:summary}

This paper addresses the problem of synthesizing barrier certificates over unbounded domains. 
Previous SDP-based approaches to this problem are incomplete, because Putinar's Positivstellensatz is only applicable in bounded cases.
We fill this gap by proposing the first complete sum-of-squares characterization for polynomial barrier certificates, achieved through the utilization of the homogenization approach derived from optimization theory.
Furthermore, we introduce the notions of homogenized systems and semialgebraic barrier certificates, which are induced from polynomial barrier certificates of the homogenized systems.
For such non-polynomial barrier certificates, we also provide a complete characterization.
Experimental results substantiate the efficacy of both of our approaches, demonstrating their enhanced expressiveness and ability to synthesize more barrier certificates in comparison to existing methods.

While our paper primarily focuses on synthesizing barrier certificates for differential dynamical systems, it is crucial to note that our method can be readily extended to other types of systems, including hybrid systems and systems with control, disturbance, or stochastic dynamics. 
Furthermore, our method can also be utilized in related verification problems such as Lyapunov function synthesis, program invariant generation, and so on.

\begin{credits}
\subsubsection{\ackname} 
This work has been partially funded by the National Key R\&D Program of China under grant No.\ 2022YFA1005101 and 2022YFA1005102, by the NSFC under grant No.\ 62192732 and 62032024, by the CAS Project for Young Scientists in Basic Research under grant No.\ YSBR-040, by the Strategic Priority Research Program of the Chinese Academy of Sciences XDB0640000 \& XDB0640200, by the Key R\&D Program of Hubei Province (2023BAB170), and by the Fundamental Research Funds for the Central Universities.
\end{credits}

%\clearpage
\bibliographystyle{splncs04}
\bibliography{bibfiles/main.bib}

\end{document}